\newtheorem{lem}{Lemma}
\newtheorem{cor}{Corollary}
\newtheorem{remark}{Remark}
\def\Prob{{\sf P}}
\def\delequalRHS{\mathrel{\ensurestackMath{\stackunder[1pt]{=}{\scriptstyle\triangledown}}}}
\def\textUB{\text{max}}
\def\textLB{\text{min}}
\def\textAgg{{\scriptsize \hbox{agg}}}
\def\textOn{{\footnotesize \hbox{on} }}
\def\textOff{ {\footnotesize \hbox{off} } }
\def\textTS{ {\scriptsize \hbox{TS} } }
\def\textBA{\text{BA}}
\def\tempBin{\ensuremath{\lambda}}
\def\numTCLs{\ensuremath{ {\sf{N_{tcl}} } }}
\def\LockedEll{\ensuremath{{\sf{L} }^{\ell} }}
\newcommand{\version}{arxiv}
\newcommand{\markedManu}{MARKED}
\begin{document}
\title{\vspace{6.4mm}Control oriented modeling of TCLs}

\author{
	\IEEEauthorblockN{Austin R. Coffman\IEEEauthorrefmark{1}$^,$\IEEEauthorrefmark{3}, Ana Bu\v{s}i\'{c}\IEEEauthorrefmark{2}, and Prabir Barooah\IEEEauthorrefmark{1}}
	\thanks{\IEEEauthorrefmark{1}University of Florida, \IEEEauthorrefmark{2}INRIA} 
	\thanks{\IEEEauthorrefmark{3}corresponding author, email: bubbaroney@ufl.edu.}
	\thanks{AC and PB are with the Dept. of Mechanical and Aerospace Engineering, University of Florida, Gainesville, FL 32601, USA. AB is with Inria Paris and also with DI ENS, \'Ecole Normale Sup\'erieure, CNRS, PSL Research
		University, Paris, France. The research reported here has been partially supported by the NSF through award 1646229 (CPS-ECCS). }
	\vspace{-0.75cm}
}
\maketitle
	\begin{abstract}
		Thermostatically controlled loads (TCLs) have the potential to be a valuable resource for the Balancing Authority (BA) of the future.   Examples of TCLs include household appliances such as air conditioners, water heaters, and refrigerators. Since the rated power of each TCL is on the order of kilowatts, to provide meaningful service for the BA, it is necessary to control large collections of TCLs. To perform design of a distributed coordination/control algorithm, the BA requires a control oriented model that describes the relevant dynamics of an ensemble. Works focusing on solely modeling the ensemble date back to the 1980's, while works focusing on control oriented modeling are more recent. In this work, we contribute to the control oriented modeling literature. We leverage techniques from computational fluid dynamics (CFD) to discretize a pair of Fokker-Planck equations derived in earlier work~\cite{MalhameElectricTAC:1985}. The discretized equations are shown to admit a certain factorization, which makes the developed model useful for control design. In particular, the effects of weather and control are shown to independently effect the system dynamics. 
	\end{abstract}
	
\section{Introduction}	
An envisioned future for the power grid is one that relies more on renewable generation sources. An inevitable challenge in this scenario is the inherent variability present in renewable generation sources, such as solar or wind. This variability requires grid operators to ramp controllable resources up and down to meet the demand when renewable generation does not. Ramp rate constraints prevent conventional generation from handling this mismatch completely. Grid level storage from batteries is expensive. Thus a new resource is being investigated to help fill the mismatch where conventional generators and batteries fall short: flexible loads.

Flexible loads are loads that can vary their power consumption, around a nominal value, without affecting the QoS of the load. Nominal refers to the power consumption without control from the BA, and power deviation as the amount deviated from nominal. The nominal consumption, for example, for air conditioners, is largely determined by ambient weather conditions. Examples of flexible loads include, TCLs~\cite{callaway2011achieving,ChenDistributedIMA:2017,CoffmanVESBuildSys:2018,matkoccal:2013} (e.g., water heaters and air conditioners) pumps for agricultural purposes~\cite{AghajFarm:2019}, pool cleaning~\cite{ChenDistributedIMA:2017}, and heating~\cite{LeeGridJESBC:2020} and HVAC systems in commercial buildings~\cite{haokowlinbarmey:2013}. Since the rated power of some flexible loads is quite small, it is necessary to consider collections of flexible loads. In the following we focus solely on TCLs.

While TCLs are a flexible load, their nominal behavior needs to be altered to take advantage of their flexibility. That is, in order to be utilized as a resource, the BA needs to issue implementable control commands to each TCL that reflects its needs. These inputs modify slightly the nominal behavior of each TCL, so that in aggregate the collection tracks the desired power deviation. Examples of inputs in the current literature include: (i) thermostat set point changes~\cite{callaway2011achieving,BashashModelingTCST:2012}, (ii) randomized control algorithms~\cite{ChenDistributedIMA:2017,CoffmanVESBuildSys:2018}, and (iii) direct load control (for example, the priority stack controller within~\cite{hao_aggregate:2015}).   

From the standpoint of control design, it is also important to have a model that describes the effects of the control input on the ensembles power consumption. Ref.~\cite{MalhameElectricTAC:1985} develops a pair of coupled Fokker-Planck equations to model an ensemble of TCLs during nominal operation. The Fokker-Planck equations are partial differential equations (PDE's) that describe the time evolution of a certain probability density function (pdf). Upon discretization, the coupled PDE's turns into coupled ODE's and the pdf turns into a probability mass function (pmf) that holds similar interpretation as the ``binned'' state common in the literature~\cite{matkoccal:2013,LiuDistributedTIE:2016}. However, since the PDE's are developed to model nominal operation it is, in general, a design choice on how to introduce control into this modeling framework.

In this work, we develop a control oriented framework for ensembles of TCLs.
This framework is based on discretization of the coupled Fokker-Planck PDE's exposed in~\cite{MalhameElectricTAC:1985}. The main contribution is that our discretization allows us to infer a special structure of the resulting discretized system. This structure decomposes the effects of exogenous disturbances, such as weather, and the control input. This structure has the so-called ``conditional independence'' decomposition appearing as an assumption in the work~\cite{busmey:CDC:2016}. There are at least two advantages of the identified structure: (i) it elucidates how one can introduce a control input and (ii) it allows for computationaly efficient control design. To our knowledge, use of discretization to obtain this conditional independence structure is absent from prior literature. 

 
\subsection{Literature review} 
There are two important ingredients for controlling collections of TCLs: (i) identifying a control input and (ii) modeling the effects of the control input. As previously mentioned, many works modify the modeling framework exposed in~\cite{MalhameElectricTAC:1985} to achieve both points (i) and (ii). 

Since PDE's are infinite dimensional, some form of a discretization is required for the eventual purpose of control design. After discretization, a finite dimensional population model can be developed. This model is of the form $\nu_{k+1} = \nu_kP_k$ where $P_k$ is a Markov transition matrix and $\nu_k$ is a marginal distribution. The works~\cite{BenenatiTractableArxiv:2019,NazirAnalysisEPSR:2020,TotuDemandCST:2017} take this route, and $\nu_k$ represents the ``fraction of flexible loads with state value in a certain bin.'' Alternative to discretization, one can define this fractional state vector in an ad-hoc fashion and develop population models by analytically computing transition probabilities~\cite{LiuDistributedTIE:2016}. It is also possible to estimate the population model through measured data~\cite{KaraImpactTSG:2015} or Monte-Carlo simulation~\cite{matkoccal:2013}.

To introduce control to the discretized models, one popular approach is to define a vector control input with $i^{th}$ entry as ``the fraction of TCLs to switch mode state in bin $i$''~\cite{matkoccal:2013,LiuDistributedTIE:2016}, leading to a bilinear control system. Another approach assumes the ability to change the thermostatic set point of each TCL. The effects of this control input can be modeled prior to discretization, and after discretization, like the previous approach, a bilinear control system results~\cite{BashashModelingTCST:2012}. One more approach introduces control by allowing the TCLs mode state to be determined through a randomized control policy~\cite{KaraImpactTSG:2015}. 

In regards to discretization our approach belongs to the first class of methods, i.e., we discretize the pde's to obtain a population model of the form $\nu_{k+1} = \nu_kP_k$. However, to introduce control to this control free population model, our approach is different from much of the literature. We study the structure of $P_k$. Elaborating, the control free population model is based on the TCLs nominal thermostatic policy. Is it then possible to `factor' this policy out, i.e., rewrite the population model as $\nu_{k+1} = \nu_k\Phi_k G_k$ so that an arbitrary control policy can be inserted instead? The answer is affirmative, and this factorization refers to the conditional independence form mentioned prior. Key in identifying this is in how we discretize the set of coupled PDEs.

In numerical experiments we evaluate the fidelity of our discretized model by comparing the state of the model to empirical quantities obtained from a simulation of TCLs. In addition, we also offer a preview of control results using the developed model with the identified structure.   

The paper proceeds as follows. In Section~\ref{sec:ModelDev} the model of the individual TCL is introduced. In Section~\ref{sec:discretzation} the PDEs introduced are discretized and in Section~\ref{sec:indStruc} the structure of the discretized model is identified. Numerical experiments are reported in Section~\ref{sec:aggModel} and we conclude in Section~\ref{sec:conc}. 	
\section{Modeling: Individual TCL} \label{sec:ModelDev}
\subsection{Deterministic Model}
An individual TCL has two state variables: (i) a temperature denoted $x(t)$ and (ii) an on/off mode denoted $m(t)$.
We consider two models for an individual TCL. The first is the following ODE,
\begin{align} \label{eq:detModelTCL}
	\frac{d}{dt} x(t) = f_m(x,t),
\end{align}
where
\begin{align}
	f_m(x,t) = -\frac{1}{RC}\left(x - \theta_a(t)\right) - m(t)\frac{\eta P}{C}.
\end{align}
The rated electrical power consumption is denoted $P$ with coefficient of performance (COP) $\eta$. The parameters $R$ and $C$ denote thermal resistance and capacitance, respectively. The signal $\theta_a(t)$ is the ambient temperature. In the following we identify $m(t) = 1$ and $m(t) =$ on, as well as $m(t) = 0$ and $m(t) =$ off. The nominal power for the TCL is the value of $P$ so that $f_1(\lambda^{\text{set}},t) = 0$. Solving this yields the nominal power for a TCL as,
\begin{align} \label{eq:invTCLbase}
\bar{P}^{\text{ind}}(t) = \frac{\theta_a(t)-\lambda^{\text{set}}}{\eta R}.
\end{align} 

\subsection{Stochastic Model}
The stochastic model is based on the deterministic model. Consider the It\^{o} stochastic differential equation (SDE),
\begin{align} \label{eq:stoModelTCL}
d x(t) = f_m(x,t)dt + \sigma^2dB(t),
\end{align}
where $B(t)$ is Brownian motion with diffusion coefficient $\sigma^2>0$. The quantity $\sigma^2dB(t)$ in~\eqref{eq:stoModelTCL} captures modeling errors in~\eqref{eq:detModelTCL}.
\paragraph{Nominal thermostat policy}
To state the Fokker-planck PDE's as in~\cite{MalhameElectricTAC:1985} we denote the nominal thermostat control policy:
\begin{align} \label{eq:thermoContLaw}
\lim_{\epsilon \rightarrow 0}\ m(t + \epsilon) = \begin{cases}
1, & x(t) \geq \tempBin^{\textUB}. \\
0, & x(t) \leq \tempBin^{\textLB}. \\
m(t), & \text{o.w.} 
\end{cases}
\end{align}
The quantities $\tempBin^{\textUB}$ and $\tempBin^{\textLB}$ respectively set the upper and lower temperature limits (i.e., the thermostatic ``deadband'') for $x(t)$. The midpoint of the deadband interval $[\lambda^{\textLB},\lambda^{\textUB}]$ is denoted $\lambda^{\text{set}}$.
The nominal policy~\eqref{eq:thermoContLaw} is only temporary; in Section~\ref{sec:indStruc} we show how to model the effects of an arbitrary randomized policy.

Now, consider the following marginal pdfs $\mu_{\textOn}, \mu_{\textOff}$:
\begin{align} \label{eq:probOnState}
	\mu_{\textOn}(\tempBin,t)d\tempBin &= \Prob\left((\tempBin < x(t) \leq \tempBin + d\tempBin), \ m(t) = \text{on} \right), \\ \label{eq:probOffState}
	\mu_{\textOff}(\tempBin,t)d\tempBin &= \Prob\left((\tempBin < x(t) \leq \tempBin + d\tempBin), \ m(t) = \text{off} \right),
\end{align}
where $\Prob(\cdot)$ denotes probability, and for now $m(t)$ evolves according to~\eqref{eq:thermoContLaw}. It was shown in~\cite{MalhameElectricTAC:1985} that the densities $\mu_{\textOn}$ and $\mu_{\textOff}$ satisfy the Fokker Planck equations,
\begin{align} \label{eq:pdeOnMode}
	\frac{\partial}{\partial t}\mu_{\textOn}(\lambda,t) &= -\nabla_\lambda\Big(f_{\textOn}(\lambda,t)\mu_{\textOn}(\lambda,t)\Big) +\frac{\sigma^2}{2}\nabla^2_\lambda\mu_{\textOn}(\lambda,t) \\ \label{eq:pdeOffMode}
	\frac{\partial}{\partial t}\mu_{\textOff}(\lambda,t) &= -\nabla_\lambda\big(f_{\textOff}(\lambda,t)\mu_{\textOff}(\lambda,t)\big) +\frac{\sigma^2}{2}\nabla^2_\lambda\mu_{\textOff}(\lambda,t)
\end{align}
that are coupled through their boundary conditions~\cite{MalhameElectricTAC:1985}, which are listed later in Section~\ref{sec:boundCond}.

There are at least two ways that the coupled equations~\eqref{eq:pdeOnMode}-\eqref{eq:pdeOffMode} can be used for modeling: (i) to model a \emph{single} TCL and (ii) to model an \emph{ensemble} of TCLs. That is, for (i) the quantities~\eqref{eq:probOnState}-\eqref{eq:probOffState} represent the \emph{probability} that a single TCLs state resides in the respective interval. For (ii) the quantities~\eqref{eq:probOnState}-\eqref{eq:probOffState} represent the \emph{fraction} of TCLs whose state resides in the respective interval. How the equations~\eqref{eq:pdeOnMode}-\eqref{eq:pdeOffMode} (specifically their discretized form) can be used to model an ensemble is discussed further in Section~\ref{sec:contAggMod}. 

\begin{figure}
	\centering
	\includegraphics[width=1\columnwidth]{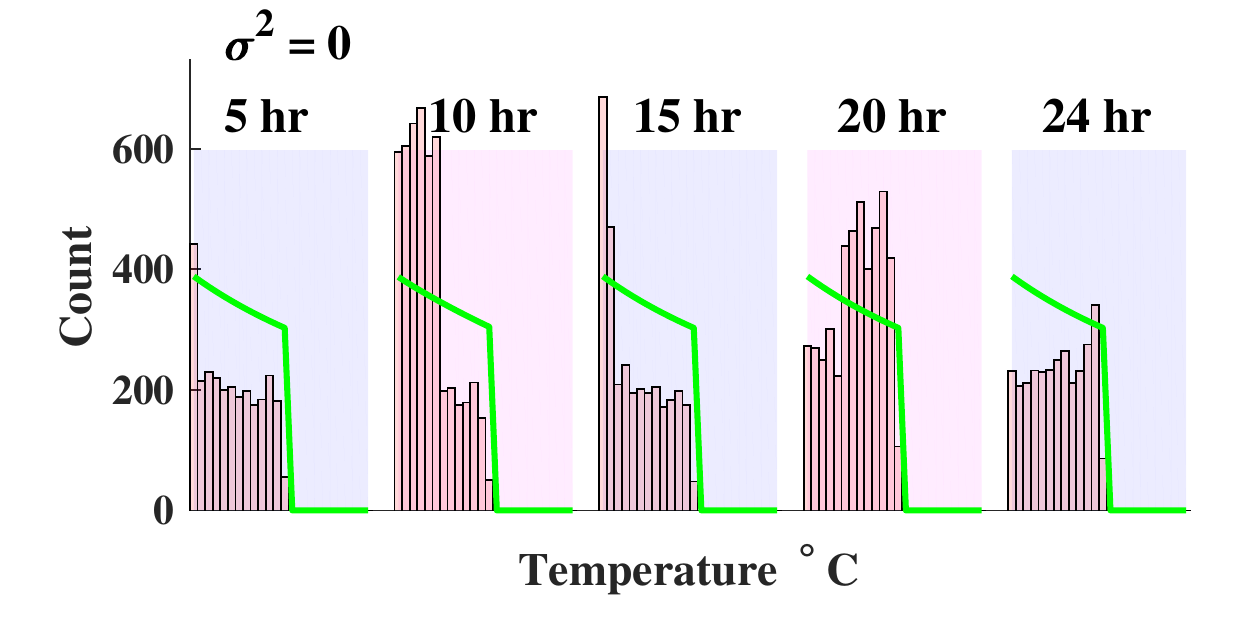}
	\caption{Discrepancy between the state of the advection equation and the histogram of the population for various time steps. Each histogram is over the temperature state for all of the on TCLs at the specified time.}
	\label{fig:advEqPeriodic}
\end{figure}
\subsubsection{Motivation for Stochastic Model}
While transport type arguments can be used to develop a pair of coupled advection equations (equations~\eqref{eq:pdeOnMode}-\eqref{eq:pdeOffMode} with $\sigma^2=0$) for the deterministic model~\cite{BashashModelingTCST:2012}, the state of these advection equations will not agree with the pointwise in time histogram of a population of TCLs simulated with~\eqref{eq:detModelTCL} (see Figure~\ref{fig:advEqPeriodic}). To see why, consider the following: without noise TCLs are periodic whereas discretization of the advection equations yields a Markov transition matrix that is irreducible and aperiodic. Hence, the iteration of this transition matrix will converge to a limiting and invariant distribution, whereas the samples from the TCLs will not since the TCL behavior is periodic. This behavior is shown in Figure~\ref{fig:advEqPeriodic}, the discretized state of the advection equation remains relatively constant while the histogram of the ensemble does not; their is no suggestion of convergence even after 24 hours.

Thus, the stochastic model~\eqref{eq:stoModelTCL} has two advantages: (i) it captures modeling errors and heterogeneity the deterministic model~\eqref{eq:detModelTCL} can not, and (ii) it also guarantees a correspondence between simulation samples from~\eqref{eq:stoModelTCL} and the state of the coupled PDEs~\eqref{eq:pdeOnMode}-\eqref{eq:pdeOffMode} (see Figure~\ref{fig:histSimCompare}).

\subsubsection{Forward thinking motivation}Further, the PDEs that are derived from the stochastic model will be the base of our control oriented model. In the following, we will discretize the PDEs~\eqref{eq:pdeOnMode}-\eqref{eq:pdeOffMode} and then show that the discretized model has special structure. Particularly, the structure elucidates how to model the aggregate under the effects of a arbitrary randomized policy. 

\begin{figure*}
	\centering
	\includegraphics[width=2\columnwidth]{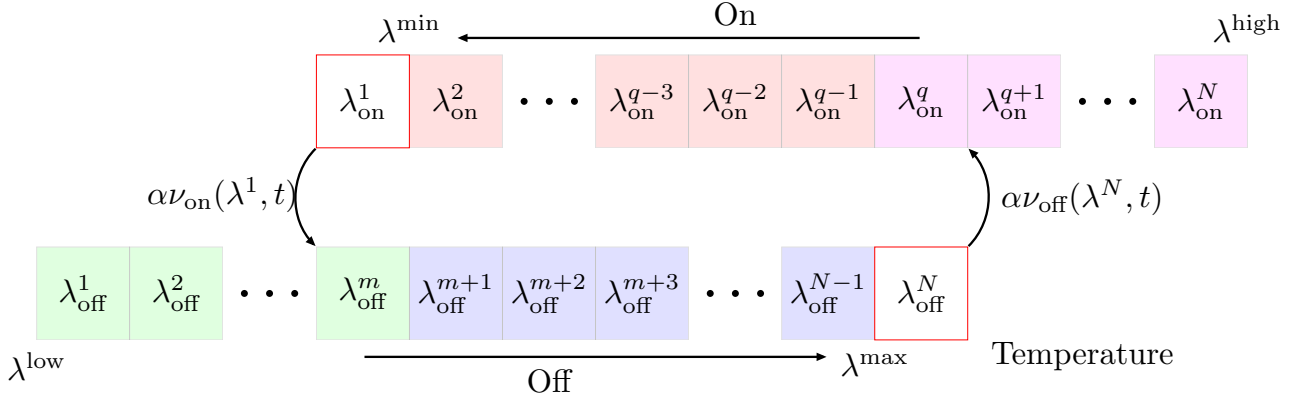}
	\caption{The control volumes (CVs). The colors correspond to the colors found in Figure~\ref{fig:sparPattern}. The values in each CV represent the nodal temperature for the CV. The arrows describe the sign of the convection of the TCL through the CVs.}
	\label{fig:cvLayout}
\end{figure*}

\section{Discretization} \label{sec:discretzation}
In order to be used, the coupled pdes~\eqref{eq:pdeOnMode}-\eqref{eq:pdeOffMode} need to be spatially and temporally discretized. We will use the finite volume method (FVM) to discretize~\eqref{eq:pdeOnMode} and~\eqref{eq:pdeOffMode}. In light of the discussion from the previous section, the goal will be to develop a control oriented model. That is, we aim to: (i) obtain a discretized model that agrees well with population quantities (avoids behavior as shown in Figure~\ref{fig:advEqPeriodic}) and (ii) discretize the model in a way that a control input for the BA can be identified. More on point (ii) will be discussed in Section~\ref{sec:indStruc}, however the discretization here plays a role. 


\subsection{Spatial discretization}
The layout of the control volumes (CV) is shown in Figure~\ref{fig:cvLayout}. The discretization is achieved by enumerating $N$, for both the on and off mode state, CV temperature values and their upper and lower boundaries:
\begin{align} \nonumber
	&\lambda_{\textOn} = (\lambda^{i}_{\textOn})_{i=1}^{N}, \ \lambda_{\textOn}^+ = \lambda_{\textOn} + \frac{\Delta\lambda}{2}, \ \lambda_{\textOn}^- = \lambda_{\textOn} - \frac{\Delta\lambda}{2}, \\ \nonumber
	&\lambda_{\textOff} = (\lambda^{i}_{\textOff})_{i=1}^{N}, \ \lambda_{\textOff}^+ = \lambda_{\textOff} + \frac{\Delta\lambda}{2}, \ \lambda_{\textOff}^- = \lambda_{\textOff} - \frac{\Delta\lambda}{2},
\end{align}
where $\Delta\lambda$ is the CV width.
All intermediate values of $\lambda_{\textOn}$ and $\lambda_{\textOff}$ are separated from each other by $\Delta\lambda$. The values in $\lambda^+_{\textOn}$ (respectively, $\lambda^+_{\textOff}$) are the right edges of the CVs and the values $\lambda^-_{\textOn}$ (respectively, $\lambda^-_{\textOff}$) are the left edges of the CVs, for example, $\lambda^{1,-}_{\textOff} = \lambda^{\text{low}}$. The quantities $\lambda^{\text{min}}$ and $\lambda^{\text{max}}$ specify the thermostat deadband, and are \emph{different} from the quantities $\lambda^{\text{high}}$ and $\lambda^{\text{low}}$ (see Figure~\ref{fig:cvLayout}).

We denote the $i^{th}$ CV as CV($i$) and further adopt the following notational simplifications,
\begin{align} \nonumber
	\mu_{\textOff}(\lambda^i,t) \triangleq \mu_{\textOff}(\lambda^i_{\textOff},t), \ \text{and} \ \mu_{\textOn}(\lambda^i,t) \triangleq \mu_{\textOn}(\lambda^i_{\textOn},t).
\end{align}
This simplification is extended to any situation that would otherwise require the double indication of the on or off state. Highlighted red in Figure~\ref{fig:cvLayout} are two additional control volumes. These control volumes are added to assist in enforcing boundary conditions that coincide with the thermostat control law~\eqref{eq:thermoContLaw}. Further discussion on this is given in section~\ref{sec:boundCond}.
\ifshowArxiv
\subsubsection{Internal CV's}
Consider the RHS of the pde~\eqref{eq:pdeOnMode} integrated over CV($i$):
\begin{align} \nonumber
	&\int_{\text{CV(i)}}\bigg(\frac{\sigma^2}{2}\frac{\partial^2}{\partial \lambda^2}\big(\mu_{\textOn}(\lambda,t)\big)-\frac{\partial}{\partial \lambda}\big(f_{\textOn}(\lambda,t)\mu_{\textOn}(\lambda,t)\big)\bigg)d\lambda \\
 \label{eq:arbOnIntCV}
&=\bigg(\frac{\sigma^2}{2}\frac{\partial}{\partial \lambda}\mu_{\textOn}(\lambda,t) -f_{\textOn}(\lambda,t)\mu_{\textOn}(\lambda,t) \bigg)\bigg\vert_{\lambda^{i,-}}^{\lambda^{i,+}},
\end{align}
where equality is by the divergence theorem~\cite{VersteegIntroductionBook:2007}.
Note, the points $\lambda^{i,-}$ and $\lambda^{i,+}$ are not control volume variables, but rather the boundaries of a single control volume. Hence, quantities in~\eqref{eq:arbOnIntCV} need to be approximated in terms of the nodal points of the neighboring control volumes. The approximations for the partial derivative are,
\begin{align} \label{eq:centDiff}
	\frac{\partial}{\partial \lambda}\mu_{\textOn}(\lambda^{i,+},t) \approx \frac{\mu_{\textOn}(\lambda^{i+1},t) - \mu_{\textOn}(\lambda^i,t)}{\Delta \lambda}, \\
	\frac{\partial}{\partial \lambda}\mu_{\textOn}(\lambda^{i,-},t) \approx \frac{\mu_{\textOn}(\lambda^{i},t) - \mu_{\textOn}(\lambda^{i-1},t)}{\Delta \lambda},
\end{align}
which correspond to a central difference approximation of the derivative. For the integrated convective term, we use the so-called upwind differencing scheme~\cite{VersteegIntroductionBook:2007}. This scheme elects the FVM equivalent of a forward or backward difference based on the sign of the convective velocity $f_{\text{on}}(\lambda,t)$. When the TCL is on $f_{\textOn}(\lambda,t) <0$ (i.e., the temperature decreases) the upwind differencing scheme prescribes:  
\begin{align}
	f_{\textOn}(\lambda^{i,-},t)\mu_{\textOn}(\lambda^{i,-},t) &= f_{\textOn}(\lambda^{i,-},t)\mu_{\textOn}(\lambda^{i},t), \ \text{and}\\
	f_{\textOn}(\lambda^{i,+},t)\mu_{\textOn}(\lambda^{i,+},t) &= f_{\textOn}(\lambda^{i,+},t)\mu_{\textOn}(\lambda^{i+1},t).
\end{align}
When the TCL is off $f_{\textOn}(\lambda,t) >0$ (i.e., the temperature increases) the upwind differencing scheme prescribes:  
\begin{align}
f_{\textOff}(\lambda^{i,-},t)\mu_{\textOff}(\lambda^{i,-},t) &= f_{\textOff}(\lambda^{i,-},t)\mu_{\textOff}(\lambda^{i-1},t),\\
f_{\textOff}(\lambda^{i,+},t)\mu_{\textOff}(\lambda^{i,+},t) &= f_{\textOff}(\lambda^{i,+},t)\mu_{\textOff}(\lambda^{i},t).
\end{align}
Now returning to the discretization of the PDE~\eqref{eq:pdeOnMode} over an arbitrary internal CV. We approximate the LHS of~\eqref{eq:pdeOnMode} integrated over the control volume as,
\begin{align} \nonumber
	\int_{\text{CV(i)}}\frac{\partial}{\partial t}\mu_{\textOn}(\lambda,t)d\lambda \approx \frac{d}{dt}\mu_{\textOn}(\lambda^i,t)\Delta\lambda = \frac{d}{dt}\nu_{\textOn}(\lambda^i,t),
\end{align}
where $\nu_{\textOn}(\lambda^i,t)\triangleq \mu_{\textOn}(\lambda^i,t)\Delta\lambda$.
We have used the ordinary differential as it will be the only differential to appear in the following. Now, denote the following
\begin{align}
	D \triangleq \frac{\sigma^2}{(\Delta\lambda)^2}, \quad \text{and} \quad F_{\textOn}^i(t) \triangleq \frac{f_{\textOn}(\lambda^{i},t)}{\Delta\lambda},
\end{align}
where the quantities $F_{\textOff}^i(t)$, $F_{\textOn}^{i,+}(t)/F_{\textOff}^{i,+}(t)$, and $F_{\textOn}^{i,-}(t)/F_{\textOff}^{i,-}(t)$ are defined analogously. Now inserting the central difference approximation and upwind difference approximation in~\eqref{eq:arbOnIntCV} and combining it with the approximation of the LHS of~\eqref{eq:pdeOnMode} we have,
\begin{align} \nonumber
\frac{d}{d t}\nu_{\textOn}(\lambda^i,t) &= \Big(F_{\textOn}^{i,-}(t) - D\Big)\nu_{\textOn}(\lambda^{i},t)+ \frac{D}{2}\nu_{\textOn}(\lambda^{i-1},t)  \\ \label{eq:stanOnCV}
&+\Big(\frac{D}{2}-F_{\textOn}^{i,+}(t)\Big)\nu_{\textOn}(\lambda^{i+1},t). 
\end{align}
The spatial discretization for the pde~\eqref{eq:pdeOffMode} is similar and yields,
\begin{align} \nonumber
\frac{d}{d t}\nu_{\textOff}(\lambda^i,t) &=  \frac{D}{2}\nu_{\textOff}(\lambda^{i+1},t)  -\Big(F_{\textOff}^{i,+}(t) + D\Big)\nu_{\textOff}(\lambda^{i},t)\\ \label{eq:stanOffCV}
&+\Big(\frac{D}{2}+F_{\textOff}^{i,-}(t)\Big)\nu_{\textOff}(\lambda^{i-1},t).
\end{align} 
\fi

\ifshowArxivAlt
\subsubsection{Internal CV's} We use a central difference to approximate the diffusion terms and the upwind scheme to approximate the convective terms. The textbook details of these schemes can be found in~\cite{VersteegIntroductionBook:2007} and the full derivation can be found in the arxiv version of this work~\cite{CoffmanControlArxiv:2020}. Now, denote the following
\begin{align}
D \triangleq \frac{\sigma^2}{(\Delta\lambda)^2}, \quad \text{and} \quad F_{\textOn}^i(t) \triangleq \frac{f_{\textOn}(\lambda^{i},t)}{\Delta\lambda},
\end{align}
where the quantities $F_{\textOff}^i(t)$, $F_{\textOn}^{i,+}(t)/F_{\textOff}^{i,+}(t)$, and $F_{\textOn}^{i,-}(t)/F_{\textOff}^{i,-}(t)$ are defined analogously. The resulting spatially discretized equations are
\begin{align} \nonumber
\frac{d}{d t}\nu_{\textOn}(\lambda^i,t) &= \Big(F_{\textOn}^{i,-}(t) - D\Big)\nu_{\textOn}(\lambda^{i},t)+ \frac{D}{2}\nu_{\textOn}(\lambda^{i-1},t)  \\ \label{eq:stanOnCV}
&+\Big(\frac{D}{2}-F_{\textOn}^{i,+}(t)\Big)\nu_{\textOn}(\lambda^{i+1},t), \\ 
\nonumber
\frac{d}{d t}\nu_{\textOff}(\lambda^i,t) &=  \frac{D}{2}\nu_{\textOff}(\lambda^{i+1},t)  -\Big(F_{\textOff}^{i,+}(t) + D\Big)\nu_{\textOff}(\lambda^{i},t)\\ \label{eq:stanOffCV}
&+\Big(\frac{D}{2}+F_{\textOff}^{i,-}(t)\Big)\nu_{\textOff}(\lambda^{i-1},t),
\end{align} 
where  $\nu_{\textOn}(\lambda^i,t)\triangleq \mu_{\textOn}(\lambda^i,t)\Delta\lambda$, and $\nu_{\textOff}(\lambda^i,t)\triangleq \mu_{\textOff}(\lambda^i,t)\Delta\lambda$.
\fi

\subsubsection{Boundary CV's} \label{sec:boundCond}
The boundary CV's are the CVs associated with the nodal values: $\lambda_{\textOn}^1$, $\lambda_{\textOn}^q$, $\lambda_{\textOn}^N$, $\lambda_{\textOff}^1$, $\lambda_{\textOff}^m$, and $\lambda_{\textOff}^N$. The superscript, for example the integer $q$ in $\lambda_{\textOn}^q$ represents the CV index. All boundary CVs can be seen in Figure~\ref{fig:cvLayout}. Discretization of the boundary CV's requires care for atleast two reasons. First, this is typically where one introduces the BCs of the pde into the numerical approximation. Secondly, on finite domains the endpoints present challenges as, for example, there is no variable $\mu_{\textOff}(\lambda^{N+1},t)$ for computation of the first partial derivative values for node $\lambda^{N+1}_\textOff$.

The BC's for the coupled PDEs~\eqref{eq:pdeOnMode}-\eqref{eq:pdeOffMode} are~\cite{MalhameElectricTAC:1985}:
\begin{align} \nonumber
	&\textbf{Absorbing Boundaries:} \\ \label{eq:absorbBoundary}
	&\qquad \qquad \qquad \mu_{\textOn}(\lambda^{\text{min}},t) = \mu_{\textOff}(\lambda^{\text{max}},t) = 0. \\ \nonumber
	&\textbf{Conditions at Infinity:} \\ \label{eq:contInfBC}
	&\qquad \qquad \qquad \mu_{\textOn}(+\infty,t) = \mu_{\textOff}(-\infty,t) = 0. \\ \nonumber
	&\textbf{Conservation of Probability:} \\ \label{eq:consProbOne}
	&\frac{\partial}{\partial\lambda}\bigg[\mu_{\textOn}(\lambda^{q,-},t)-\mu_{\textOn}(\lambda^{q-1,+},t)  -\mu_{\textOff}(\lambda^{N-1,+},t)\bigg]= 0. \\ \label{eq:consProbTwo}
	&\frac{\partial}{\partial\lambda}\bigg[\mu_{\textOff}(\lambda^{m,+},t)-\mu_{\textOn}(\lambda^{2,-},t) -\mu_{\textOff}(\lambda^{m+1,-},t)\bigg]= 0. \\
	&\nonumber \textbf{Continuity:} \\ \label{eq:contBC_1}
	&\qquad \qquad \qquad \mu_{\textOn}(\lambda^{q,-},t) = \mu_{\textOn}(\lambda^{q-1,+},t). \\  \label{eq:contBC_2}
	&\qquad \qquad \qquad \mu_{\textOff}(\lambda^{m,+},t) = \mu_{\textOff}(\lambda^{m+1,-},t). 
\end{align}
As we will see, implementation of some of the above conditions will require a bit of care. However, some are quite trivial to enforce. For example, by default, the continuity conditions~\eqref{eq:contBC_1} and~\eqref{eq:contBC_2} are satisfied due to our choice of CV structure, since, for example, for any $i$ we have $\lambda_{\textOff}^{i,-}=\lambda_{\textOff}^{i-1,+}$ and $\lambda_{\textOff}^{i,+}=\lambda_{\textOff}^{i+1,-}$. 
\ifx 0
Further, the choice of grid also reduce the conservation of probability to the:
\begin{align} \label{eq:modConsProb}
	\frac{\partial}{\partial \lambda}\mu_{\textOff}(\lambda^{N-1,+},t) = \frac{\partial}{\partial \lambda}\mu_{\textOff}(\lambda^{2,-},t) = 0,
\end{align}
which with the BC~\eqref{eq:absorbBoundary} specifies a periodic boundary condition. 
We can reduce~\eqref{eq:consProbOne} and~\eqref{eq:consProbOne} to~\eqref{eq:modConsProb} since the FVM, by design, conserves.Although, we return to this condition after dealing with the simpler conditions first.   
\fi

Now focusing on the conditions at infinity BC~\eqref{eq:contInfBC}, we enforce instead the following conditions:
\begin{align}
	\frac{\partial}{\partial \lambda}\mu_{\textOff}(\lambda^{1,-},t) = 0, \quad \text{and} \quad
	\frac{\partial}{\partial \lambda}\mu_{\textOn}(\lambda^{N,+},t) = 0.
\end{align}
The reason for this is because our computational domain cannot extend to infinity, where the BC~\eqref{eq:contInfBC} is required to hold. Practically, the temperature values $\lambda_{\textOff}^1$ and $\lambda_{\textOn}^N$ are quite far away from the deadband and so the density here will be near zero anyways.  

Now, consider the spatial discretization of the CVs associated with the BC at infinity. First considering the CV associated with the temperature $\lambda^1_{\textOff}$, we have that the differential equation is
\begin{align} \nonumber
\frac{d}{d t}\nu_{\textOff}(\lambda^1,t) &= \Big(-F_{\textOff}^{1,+}(t) - \frac{D}{2}\Big)\nu_{\textOff}(\lambda^{1},t)  \\ \label{eq:BC_lambdaOff1}
&+\Big(\frac{D}{2}+F_{\textOff}^{2,-}(t)\Big)\nu_{\textOff}(\lambda^{2},t).
\end{align}
Considering the CV associated with the temperature $\lambda^N_{\text{on}}$, we have
\begin{align} \nonumber
\frac{d}{d t}\nu_{\textOn}(\lambda^N,t) &= \Big(F_{\textOn}^{N,+}(t) - \frac{D}{2}\Big)\nu_{\textOn}(\lambda^{N},t)  \\ \label{eq:BC_lambdaOnN}
&+\Big(\frac{D}{2}-F_{\textOn}^{N,+}(t)\Big)\nu_{\textOn}(\lambda^{N-1},t).
\end{align}
In the above we make the assumption that $\nu_{\textOff}(\lambda^{1,-} - \Delta\lambda,t) = 0$ and $\nu_{\textOn}(\lambda^{N,+} + \Delta\lambda,t) = 0$. 

Now focus on the absorbing boundary~\eqref{eq:absorbBoundary} and conservation of probability~\eqref{eq:consProbOne}-\eqref{eq:consProbTwo} boundary conditions. These BCs have the following meaning. The condition~\eqref{eq:absorbBoundary} clamps the density at the end of the deadband to zero. BC~\eqref{eq:consProbOne} reads: the net-flux across the temperature value $\lambda^q_{\textOn}$ is equal to the flux of density going from off to on. In order to enforce both~\eqref{eq:consProbOne} and \eqref{eq:consProbTwo} we will model the flux due to TCLs switching as sources/sinks. Before doing this, we mention some conceptual issues with enforcing the BC~\eqref{eq:absorbBoundary}.

Problematically, a TCL's state trajectory will never satisfy the BC~\eqref{eq:absorbBoundary} since to switch its mode state the TCLs temperature sensor will have to register a value outside the deadband. That is, it is possible to enforce the BC~\eqref{eq:absorbBoundary}, however the developed model would have a discrepancy with the behavior of a TCL. To combat this, we introduce two additional CV's associated with the temperatures $\lambda^1_\textOn$ and $\lambda^N_\textOff$, which are the ones outlined in red in Figure~\ref{fig:cvLayout}. We then transfer the BC~\eqref{eq:absorbBoundary} to one on the added CVs, where the transferred BC is now
\begin{align} \label{eq:modfAbsCond}
	\mu_{\textOn}(\lambda^{1,-},t) = \mu_{\textOff}(\lambda^{N,+},t) &= 0.
\end{align}
As mentioned, to enforce the conservation of probability BC we use a source/sink type argument, which we also enforce on the added CVs. To see what we mean by source/sink argument, consider the following: some rate of TCLs are transferred out of the CV $\lambda^N_\textOff$ and into the CV $\lambda^q_\textOn$ (as depicted in Figure~\ref{fig:cvLayout}) due to thermostatic control. Since during operation, any TCL within the CV $\lambda^N_\textOff$ would immediately switch on, we model the sink as simply $-\nu_{\textOff}(\lambda^{N},t)$. The rate of the sink is then given as $-\gamma\nu_{\textOff}(\lambda^{N},t)$, where $\gamma>0$ is a modeling choice and a constant of appropriate units that describes the discharge rate. We shortly given insight on how to elect a value for $\gamma$. Now discretizing the CV corresponding to the nodal value $\lambda^N_{\textOff}$ subject to the BC~\eqref{eq:modfAbsCond} and the sink $-\nu_{\textOff}(\lambda^{N},t)$ we obtain,
\begin{align} \label{eq:BC_lambdaOffN}
	\frac{d}{dt}\nu_{\textOff}(\lambda^{N},t) &= \Big(\frac{D}{2}+F_{\textOff}^{N,-}(t)\Big)\nu_{\textOff}(\lambda^{N-1},t) -\alpha\nu_{\textOff}(\lambda^{N},t), 
\end{align}
where $\alpha = \big(\gamma+ D\big)$. In obtaining the above, we have made the reasonable assumption that $\nu_{\textOff}(\lambda^{N,+}+\Delta\lambda,t) = 0$.  The quantity $\alpha\nu_{\textOff}(\lambda^{N},t)$ represents the rate of change of density from the CV $\lambda^N_{\textOff}$ to the CV $\lambda^q_{\textOn}$, as depicted in Figure~\ref{fig:cvLayout}. Consequently, to conserve probability, we must add this quantity as a source to the ode for the CV $\lambda^q_{\textOn}$, i.e., 
\begin{align}\label{eq:boundOff2On}
	\frac{d}{dt}\nu_{\textOn}(\lambda^{q},t) &= \dots + \alpha\nu_{\textOff}(\lambda^{N},t). 
\end{align}
The dots in equation~\eqref{eq:boundOff2On} represent the portion of the dynamics for the standard internal CV (i.e., the RHS of~\eqref{eq:stanOnCV}) for the temperature node $\lambda^q_{\textOn}$. A similar argument is used for the BC~\eqref{eq:consProbTwo} with the CV's $\lambda^{1}_{\textOn}$ and $\lambda^{m}_{\textOff}$, and the corresponding differential equations are,
\begin{align} \label{eq:BC_lambdaOn1}
\frac{d}{dt}\nu_{\textOn}(\lambda^{1},t) &= \Big(\frac{D}{2}-F_{\textOn}^{1,+}(t)\Big)\nu_{\textOn}(\lambda^{2},t) - \alpha\nu_{\textOn}(\lambda^{1},t), \\ \label{eq:boundOn2Off}
\frac{d}{dt}\nu_{\textOff}(\lambda^{m},t) &= \dots + \alpha\nu_{\textOn}(\lambda^{1},t). 
\end{align}
Practically, once the differential equations are discretized in time with timestep $\Delta t$ one will then elect $\gamma$ so that $\alpha = (\Delta t)^{-1}$. With this choice, the discretized equations have the interpretations that all mass starting in state $\nu_{\textOff}(\lambda^{N},\cdot)$ at time $t$ is transferred out by time $t+\Delta t$ into the state $\nu_{\textOn}(\lambda^{q},\cdot)$. 

\subsubsection{Overall system}
\begin{figure}
	\centering
	\includegraphics[width=0.75\columnwidth]{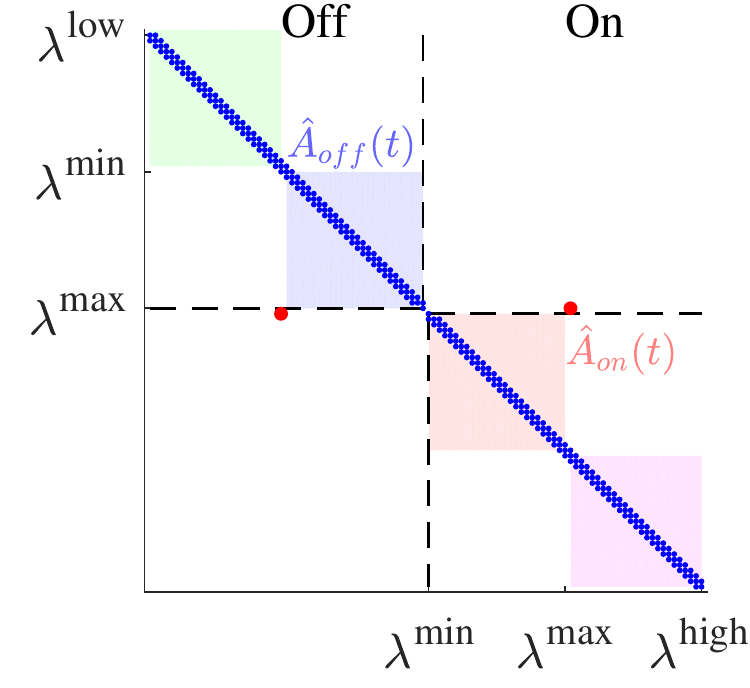}
	\caption{Sparsity pattern of the matrix $A(t)$ for $N=51$ CVs for both the on and off state. The colors correspond to the colors found in Figure~\ref{fig:cvLayout}.}
	\label{fig:sparPattern}
\end{figure}
Denoting the state of the overall system at time $t$ as the row vector, $\nu(t) = [\nu_{\textOff}(t),\nu_{\textOn}(t)]$ with
\begin{align}
	\nu_{\textOff}(t) &= [\nu_{\textOff}(\lambda^1,t), \dots,\nu_{\textOff}(\lambda^N,t)], \\ 
	\nu_{\textOn}(t) &=[\nu_{\textOn}(\lambda^1,t), \dots,\nu_{\textOn}(\lambda^N,t)],
\end{align}
and combing the odes: \eqref{eq:stanOnCV} and~\eqref{eq:stanOffCV} for all of the internal CVs and \eqref{eq:BC_lambdaOff1}, \eqref{eq:BC_lambdaOnN}, \eqref{eq:BC_lambdaOffN}, \eqref{eq:boundOff2On}, \eqref{eq:BC_lambdaOn1}, \eqref{eq:boundOn2Off} for the BC CVs. We obtain the linear time varying system,
\begin{align} \label{eq:matContDynNotTran}
	\frac{d}{dt}\nu^T(t) = \mathcal{A}(t)\nu^T(t).
\end{align}
The matrix $\mathcal{A}(t)$ contains all of the coefficients from the individual ode's developed so far from spatial discretization. In the following, it will be convenient to view the dynamics~\eqref{eq:matContDynNotTran} in their transposed form
\begin{align} \label{eq:dynContMC}
\frac{d}{dt}\nu(t) = \nu(t)A(t),
\end{align}
with $A(t) = \mathcal{A}^T(t)$. We have included the sparsity pattern of $A(t)$ in Figure~\ref{fig:sparPattern}. The matrix $A(t)$ also satisfies the properties of a transition rate matrix, described in the following lemma.
\begin{lem} \label{lem:rateMat}
	For all $t$, the matrix $A(t)$ is a transition rate matrix, that is, it satisfies for all $t$,
	\begin{align} \nonumber
	\text{(i):} &\quad A(t)\mathbb{1} = 0. \\ \nonumber
	\text{(ii):} &\quad \forall \ i, \ A_{i,i}(t) \leq 0, \ \text{and} \ \forall \ j\neq i \  A_{i,j}(t) \geq 0. 
	\end{align}
\end{lem}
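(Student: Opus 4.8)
The plan is to reduce everything to the coefficient matrix $\mathcal{A}(t)$ from \eqref{eq:matContDynNotTran}, since $A(t)=\mathcal{A}^T(t)$, and then to verify the two claims by direct inspection of the ODE coefficients collected in Section~\ref{sec:discretzation}. Transposition sends diagonal entries to diagonal entries and off-diagonal entries to off-diagonal entries, so property (ii) for $A(t)$ is equivalent to requiring $\mathcal{A}_{ii}(t)\le 0$ and $\mathcal{A}_{ij}(t)\ge 0$ for $i\neq j$; these entries are read off directly as the self- and neighbor-coefficients of each discretized equation. For property (i), note that $A(t)\mathbb{1}=0$ is equivalent to every column of $\mathcal{A}(t)$ summing to zero, i.e.\ $\mathbb{1}^T\mathcal{A}(t)=0$, which is exactly the statement that the total mass $\mathbb{1}^T\nu^T(t)$ is conserved by the dynamics. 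I would therefore prove (i) as a conservation statement rather than by brute force.

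For (ii) I would enumerate the coefficient types. The diffusion constant satisfies $D=\sigma^2/(\Delta\lambda)^2>0$, and by the upwind construction the convective velocity has a fixed sign in each mode: $f_{\textOn}(\lambda,t)<0$ and $f_{\textOff}(\lambda,t)>0$, hence $F_{\textOn}^{i,\pm}(t)<0$ and $F_{\textOff}^{i,\pm}(t)>0$. With these facts the self-coefficients in \eqref{eq:stanOnCV}, \eqref{eq:stanOffCV}, \eqref{eq:BC_lambdaOff1}, \eqref{eq:BC_lambdaOnN}, \eqref{eq:BC_lambdaOffN}, and \eqref{eq:BC_lambdaOn1} are each manifestly nonpositive (e.g.\ $F_{\textOn}^{i,-}-D<0$, $-(F_{\textOff}^{i,+}+D)<0$, and $-\alpha<0$ since $\alpha=\gamma+D>0$), while every neighbor/transfer coefficient is nonnegative (e.g.\ $\tfrac{D}{2}\ge 0$, $\tfrac{D}{2}-F_{\textOn}^{i,+}>0$, $\tfrac{D}{2}+F_{\textOff}^{i,-}>0$, and the source coefficient $\alpha>0$ in \eqref{eq:boundOff2On} and \eqref{eq:boundOn2Off}). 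This step is mechanical once the sign of $f_m$ is pinned down.

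For (i) I would show the column sums vanish state by state. For an interior node the only nonzero coefficients in its column come from its own equation and from the two neighboring equations; using the face-continuity identity $F^{i,+}(t)=F^{i+1,-}(t)$ (the convective velocity is single-valued at a shared CV face) together with the symmetric diffusion stencil $(\tfrac{D}{2},-D,\tfrac{D}{2})$, the convective and diffusive contributions cancel separately, giving a zero column sum. The main obstacle is the boundary columns, where the absorbing condition \eqref{eq:modfAbsCond} and the switching source/sink terms enter: I must confirm that the rate $\alpha\nu_{\textOff}(\lambda^N,t)$ removed as a sink in \eqref{eq:BC_lambdaOffN} is reintroduced verbatim as the source in \eqref{eq:boundOff2On} (and symmetrically for \eqref{eq:BC_lambdaOn1} and \eqref{eq:boundOn2Off}), and that the diffusive and convective fluxes at the absorbing and zero-flux faces are discretized consistently so that no spurious flux is created or destroyed at the domain ends. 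Verifying this exact source/sink balance, and the consistency of the boundary-face fluxes, is the delicate part of the argument; the interior cancellations and the sign checks of (ii) are routine by comparison.
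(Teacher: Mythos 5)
Your proposal is correct and takes essentially the same route as the paper, whose entire proof is two sentences: property (i) is asserted from the mass-conservation property of the FVM, and property (ii) is said to be ``inferred from each individual CV equation.'' Your write-up simply makes explicit what the paper leaves implicit --- the sign facts $D>0$, $f_{\textOn}<0$, $f_{\textOff}>0$ for (ii), and the shared-face flux cancellation together with the $\alpha$ source/sink balance between \eqref{eq:BC_lambdaOffN}--\eqref{eq:boundOff2On} and \eqref{eq:BC_lambdaOn1}--\eqref{eq:boundOn2Off} for (i) --- so it is, if anything, more complete than the published argument.
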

\ifshowArxivAlt
\begin{proof}
See extended arxiv version~\cite{CoffmanControlArxiv:2020}.
\end{proof}
\fi
\ifshowArxiv
\begin{proof}
	The FVM method is well known to conserve mass, so that property (i) is readily satisfied. Property (ii) can be inferred from the each individual CV equation.
\end{proof}
\fi

\subsection{Temporal discretization}
To temporally integrate the dynamics~\eqref{eq:dynContMC} we use a first order Euler approximation with time step $\Delta t>0$. Making the identifications $\nu_{k+1}\triangleq\nu(t_{k+1})$, $\nu_{k}\triangleq\nu(t_k)$, and $A_k \triangleq A(t_k)$ we have
\begin{align} \label{eq:discDynEsem}
	\nu_{k+1} = \nu_kP_k, \ \text{with} \ P_k = I + \Delta tA_k.
\end{align}
In the continuous time setting elements of the vector $\nu(t)$ were referred to as, for example, $\nu_{\textOn}(\lambda^i,t)$. The counterpart to this, in the discrete time setting, is referring to elements of $\nu_k$ as, for example, $\nu_{\textOn}[\lambda^i,k]$. 

\section{Identifying structure and the control input} \label{sec:indStruc}
We started with the PDEs~\eqref{eq:pdeOnMode}-\eqref{eq:pdeOffMode}, and in the previous section completely discretized them. Recall, that in the original work~\cite{MalhameElectricTAC:1985} the PDEs were developed under the assumption that the mode state $m(t)$ evolved according to~\eqref{eq:thermoContLaw}. Hence, from the viewpoint of control, we still need to identify the control input since the final discretized model~\eqref{eq:discDynEsem} has no control input. The goal of this section is to identify any structure that may be present in the matrix $P_k$ appearing in~\eqref{eq:discDynEsem} and to then exploit it for purposes of introducing a control input. Key to doing this is the result that $P_k$ is a transition matrix.
\begin{lem}\label{lem:CFL}
	Denote the $i^{th}$ diagonal element of the matrix $A_k$ as $[A_k]_{i,i}$. The matrix $P_k$ is a transition matrix if,
	\begin{align} \nonumber
	\forall \ i, \ \text{and} \ \forall \ k, \quad 0 < \Delta t \leq \frac{1}{\left|[A_k]_{i,i}\right|}.
	\end{align}
\end{lem}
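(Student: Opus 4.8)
The plan is to verify directly that $P_k = I + \Delta t A_k$ satisfies the two defining properties of a (row-)transition matrix: every row sums to one and every entry is non-negative. Because the dynamics are written as $\nu_{k+1} = \nu_k P_k$ with $\nu_k$ a row vector representing a distribution, the relevant notion is row-stochasticity, i.e. $P_k\mathbb{1} = \mathbb{1}$ together with $[P_k]_{i,j} \geq 0$. Throughout I would lean on Lemma~\ref{lem:rateMat}, which supplies $A_k\mathbb{1} = 0$, $[A_k]_{i,i} \leq 0$, and $[A_k]_{i,j} \geq 0$ for $j \neq i$.

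First I would dispatch the row-sum property, which is independent of the time-step bound. Using property (i) of Lemma~\ref{lem:rateMat}, $P_k\mathbb{1} = (I + \Delta t A_k)\mathbb{1} = \mathbb{1} + \Delta t (A_k\mathbb{1}) = \mathbb{1}$, so the rows of $P_k$ sum to one for any $\Delta t$. Next I would establish non-negativity, splitting into off-diagonal and diagonal entries. For $j \neq i$ we have $[P_k]_{i,j} = \Delta t\,[A_k]_{i,j} \geq 0$, since $\Delta t > 0$ and $[A_k]_{i,j} \geq 0$ by property (ii). For the diagonal, $[P_k]_{i,i} = 1 + \Delta t\,[A_k]_{i,i} = 1 - \Delta t\,\bigl|[A_k]_{i,i}\bigr|$, where I used $[A_k]_{i,i} \leq 0$. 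This last quantity is non-negative precisely when $\Delta t\,\bigl|[A_k]_{i,i}\bigr| \leq 1$, i.e. $\Delta t \leq 1/\bigl|[A_k]_{i,i}\bigr|$; imposing this for every $i$ and every $k$ is exactly the hypothesized bound, which then secures non-negativity of all diagonal entries.

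The computation is elementary, so there is no genuine obstacle; the only point requiring care is recognizing that the stated bound is the single binding constraint. The row-sum identity and the off-diagonal non-negativity hold for all positive $\Delta t$, so the restriction arises solely from keeping the self-transition probability $[P_k]_{i,i}$ from becoming negative. This is precisely a discrete-time CFL-type stability condition: it is the largest explicit-Euler step for which $P_k$ remains a valid stochastic matrix, and it is consistent with the earlier choice $\alpha = (\Delta t)^{-1}$, under which a boundary cell empties completely within one step, i.e. its self-transition probability is driven exactly to zero.
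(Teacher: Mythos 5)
Your proof is correct and follows essentially the same route as the paper's: both invoke Lemma~\ref{lem:rateMat} to get $P_k\mathbb{1}=\mathbb{1}$ and non-negativity of the off-diagonal entries, and both use the stated bound on $\Delta t$ only to keep the diagonal entries of $I+\Delta t A_k$ in $[0,1]$. Your version simply spells out the elementary computations more explicitly than the paper does.
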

\begin{proof}
	From Lemma~\ref{lem:rateMat} we have that $P_k\mathbb{1} = I\mathbb{1} + \Delta tA_k\mathbb{1} = \mathbb{1}$ since $A_k\mathbb{1} = 0$. Also from Lemma~\ref{lem:rateMat}, every element of $A_k$ is non-negative, save for the diagonal elements. Under the hypothesis on $A_k$, then every diagonal element of $I + \Delta tA_k$ will be in $[0,1]$. 
\end{proof}
Now, when the conditions of Lemma~\ref{lem:CFL} are met $P_k$ is a transition matrix and hence each $\nu_k$ can be viewed as a marginal distribution if $\nu_0\mathbb{1} = 1$ and $\nu[\cdot,0] > 0$. The structure of this marginal is given from~\eqref{eq:probOnState} for the on state (a similar interpretation holds for the off state) as,
\begin{align} 
\nu_{\textOn}[\lambda^i,k] 
&=\Prob\left(x(t_k) \in \text{CV}(i), \ m(t_k) = \text{on} \right),
\end{align} 
where $x(t_k)$ is the temperature. Now denote, $x_k \triangleq x(t_k)$, $m_k \triangleq m(t_k)$, and
\begin{align} \label{eq:binnedState}
	I_k \triangleq \sum_{i=1}^{N}i\mathbf{I}\Big(x_k \in \text{CV}(i)\Big),
\end{align}
where $\mathbf{I}(\cdot)$ is the indicator function. The quantity $I_k$ can be thought of as a ``binned'' state that indicates the CV index. Using $I_k$ we rewrite $\nu_{\textOn}[\lambda^i,k]$ and $\nu_{\textOff}[\lambda^i,k]$ as,
\begin{align} \label{eq:margNu}
	\nu_{\textOn}[\lambda^i,k] &= \Prob\left(I_k = i, \ m_k = \text{on} \right), \ \text{and} \\
	\nu_{\textOff}[\lambda^i,k] &= \Prob\left(I_k = i, \ m_k = \text{off} \right).
\end{align}

\subsection{Conditional independence of $P_k$}
From~\eqref{eq:margNu}, the matrix $P_k$ (with the conditions of Lemma~\ref{lem:CFL} satisfied) is the transition matrix for the joint process $(I_k,m_k)$. In the following, we refer to the values of $I_k$ with $i$ and $j$ and the values of $m_k$ with $u$ and $v$. We introduce the following notation to refer to the elements of the transition matrix $P_k$: $P_w((i,u),(j,v))\triangleq$
\begin{align} \nonumber
\Prob\Big(I_{k+1} = j, \ m_{k+1} = v \ \vert \ I_k = i, \  m_k = u, \ \theta^a_k = w\Big).
\end{align}
Recall, the matrix $P_k$ is derived for the nominal thermostat policy. We will now show that the matrix $P_k$ can be written as the product of two matrices. One depending on the nominal thermostat policy and one depending on weather and TCL system dynamics. That is, to show that each element of the matrix $P_k$ can be written as,
\begin{align} \label{eq:condIndFact}
	P_w((i,u),(j,v)) = \phi_u(v \ \vert \ j)G_w((i,u),j)
\end{align}
where: $G_w((i,u),j)\triangleq$
\begin{align} 
	&\Prob\left(I_{k+1} = j \ \vert \ I_k = i, \ m_k = u, \ \theta^a_k = w  \right), \ \text{and}\\ 
	&\phi_u(v \ \vert \ j) \triangleq\Prob\left(m_{k+1} = v \ \vert \ I_{k+1} = j, \ m_{k} = u \right).
\end{align}
The quantity $\phi_u(v \ \vert \ j)$ is the factor that depends on the nominal thermostat policy. As such in the following we denote $\phi_u(v \ \vert \ j)$ as a policy. The vectorized form of the policies are, 
\begin{align}
\phi_{\textOff} \triangleq
\phi_{\textOff}(\text{on} \ \vert \ \cdot), \quad  \text{and} \quad \phi_{\textOn} \triangleq
\phi_{\textOn}(\text{off} \ \vert \ \cdot),
\end{align}
where $\phi_{\textOff},\ \phi_{\textOn}\in \mathbb{R}^N$. The factorization~\eqref{eq:condIndFact} is represented in matrix form as,
\begin{align} \label{eq:condIndFactMat}
	P_k = \Phi G_k,
\end{align}
where $\Phi \in \mathbb{R}^{2N\times4N}$ and $G_k \in \mathbb{R}^{4N\times2N}$. The subscript $k$ on $G_k$ is to denote its dependence on the time varying ambient temperature $\theta^a_k$. The factorization~\eqref{eq:condIndFactMat} is paramount as it tells us how the nominal thermostat policy and weather independently contribute to the overall dynamics. Inversely, it then informs us how to define the matrix $P_k$ for a different (possibly randomized) policy.

We show the factorization~\eqref{eq:condIndFactMat} through construction, i.e., we find a matrix $\Phi$ and $G_k$ that simultaneously satisfy~\eqref{eq:condIndFactMat} and~\eqref{eq:condIndFact}. We start this construction through the sparsity structure shown for $A(t)$ in Figure~\ref{fig:sparPattern}. Based on shaded regions of the matrix $A(t)$ shown in Figure~\ref{fig:sparPattern}, we define the following:
\begin{align}
	P^{\textOn}_k = I +\Delta tA^{\textOn}_k, \ \text{and} \  P^{\textOff}_k = I +\Delta tA^{\textOff}_k, \\
	\hat{P}^{\textOn}_k = I +\Delta t\hat{A}^{\textOn}_k, \ \text{and} \  \hat{P}^{\textOff}_k = I +\Delta t\hat{A}^{\textOff}_k.
\end{align}
More precisely $A^{\textOn}_k$ (respectively, $A^{\textOff}_k$) is the matrix containing  the coefficients of the spatially discretized PDE~\eqref{eq:pdeOnMode} (respectively, PDE~\eqref{eq:pdeOffMode}) evaluated at time $t_k$. That is, $A^{\textOn}_k$ is the matrix that corresponds to the bottom-right quadrant encompassed by the dashed black line in Figure~\ref{fig:sparPattern}. The matrix $\hat{A}^{\textOn}_k$ (respectively, $\hat{A}^{\textOff}_k$) holds the same interpretation as $A^{\textOn}_k$ (respectively, $A^{\textOff}_k$) except restricted to the control volumes between $[\lambda^{\text{min}},\lambda^{\text{max}}]$. We additionally define the following matrices,
\begin{align}
	S^{\textOn}_k = \begin{bmatrix}
	\mathbf{0} & \mathbf{0} \\
	\hat{P}_k^{\textOn} & \mathbf{0} 
	\end{bmatrix}, \ \text{and} \ 
	S^{\textOff}_k = \begin{bmatrix}
	\mathbf{0} & \hat{P}_k^{\textOff} \\
	\mathbf{0} & \mathbf{0} 
	\end{bmatrix}
\end{align}
which will be used to show the factorization~\eqref{eq:condIndFact} in the following lemma. The size of the zero matrices in both $S^{\textOn}_k$ and $S^{\textOff}_k$ are such that the size of the matrices $S^{\textOn}_k$ and $S^{\textOff}_k$ is the same as $P^{\textOn}_k$ and $P^{\textOff}_k$.
\begin{lem} \label{lem:condFact}
	Let $e^{\ell}$ be the $\ell^{th}$ canonical basis vector in $\mathbb{R}^N$. Let $\phi_{\textOff} = e^1$ and $\phi_{\textOn} = e^N$ then denote $\Phi_{\textOn} = \text{diag}(\phi_{\textOn})$ and $\Phi_{\textOff} = \text{diag}(\phi_{\textOff})$. Now, let $\Phi$ and $G_k$ be given as,
	\begin{align} \label{eq:polMatInLem}
		\Phi &= \begin{bmatrix}
			I - \Phi_{\textOff} & \Phi_{\textOff} & 0 & 0 \\
			0 & 0 & \Phi_{\textOn} & I - \Phi_{\textOn}
		\end{bmatrix} \\
		G_k &= \begin{bmatrix}
		0 & S_k^{\textOn} & 0 & P_k^{\textOff} \\
		P_k^{\textOn} & 0 & S_k^{\textOff} & 0
		\end{bmatrix}^T.
	\end{align}
	If $\alpha = (\Delta t)^{-1}$ (appearing in~\eqref{eq:boundOff2On} and~\eqref{eq:boundOn2Off}),
	then $P_k = \Phi G_k.$
\end{lem}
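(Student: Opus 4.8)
The plan is to prove $P_k=\Phi G_k$ by direct construction, i.e.\ to substitute the stated $\Phi$ and $G_k$, carry out the $2N\times 2N$ block product, and verify that it agrees entry-by-entry with $P_k=I+\Delta tA_k$ as assembled from the control-volume ODEs. First I would fix the bookkeeping. With the state ordered as $\nu=[\nu_{\textOff},\nu_{\textOn}]$, I partition $P_k$ into four $N\times N$ blocks indexed by (current mode, next mode) and read each block off the discretization. The internal-CV equations \eqref{eq:stanOnCV}--\eqref{eq:stanOffCV} couple only same-mode neighbors, so they populate the two same-mode blocks and are, away from the switching volumes, exactly $P_k^{\textOn}$ and $P_k^{\textOff}$ (which by their definition $I+\Delta t A_k^{\textOn/\textOff}$ are the on/off quadrants of $P_k$ in Fig.~\ref{fig:sparPattern}). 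The source/sink boundary equations \eqref{eq:BC_lambdaOffN}, \eqref{eq:boundOff2On} and \eqref{eq:BC_lambdaOn1}, \eqref{eq:boundOn2Off} are the only terms coupling across modes and therefore populate the two cross-mode blocks.

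Next I would multiply $\Phi G_k$ out. With $\phi_{\textOff}=e^1$, $\phi_{\textOn}=e^N$ the matrices $\Phi_{\textOff}=\mathrm{diag}(e^1)$ and $\Phi_{\textOn}=\mathrm{diag}(e^N)$, together with $I-\Phi_{\textOff}$ and $I-\Phi_{\textOn}$, act as complementary projectors: $I-\Phi_{\bullet}$ annihilates exactly the row of the switching CV and leaves every other row untouched, while $\Phi_{\bullet}$ keeps only that switching row. Block multiplication then yields four $N\times N$ blocks, each a product of one projector with one of the (transposed) factors $P_k^{\textOn},P_k^{\textOff},S_k^{\textOn},S_k^{\textOff}$; the same-mode blocks reproduce the internal spatial dynamics with the switching row deleted, and the cross-mode blocks reproduce the source terms routed through $S_k^{\textOn},S_k^{\textOff}$, whose $\hat P_k$ sub-blocks carry the mass into the target CVs. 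At the level of \eqref{eq:condIndFact} this is just the chain rule $\Prob(I_{k+1}=j,m_{k+1}=v\mid I_k=i,m_k=u)=\phi_u(v\mid j)\,G_w((i,u),j)$, so it suffices to show the policy factor is the \emph{deterministic} nominal thermostat: $\phi_u(v\mid j)=\mathbf{I}(v=u)$ at interior bins, and a sure mode flip at the single boundary bin selected by $e^1$ (resp.\ $e^N$).

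The hard part, and the only place the hypothesis $\alpha=(\Delta t)^{-1}$ is used, is matching the switching blocks. From \eqref{eq:BC_lambdaOffN} the diagonal coefficient at the off-switching CV equals $-\alpha$, so the corresponding diagonal entry of $P_k=I+\Delta tA_k$ is $1-\Delta t\,\alpha$; this vanishes \emph{exactly} when $\alpha=(\Delta t)^{-1}$, forcing all mass out of that CV in one step, and the conserved mass reappears via \eqref{eq:boundOff2On} as an entry $\Delta t\,\alpha=1$ into $\lambda^{q}_{\textOn}$, a deterministic off-to-on transition. This is precisely what the selector $\phi_{\textOff}=e^1$ (a probability-one switch at that CV) composed with $S_k^{\textOn}$ produces in $\Phi G_k$; the on-to-off case is symmetric, using $\phi_{\textOn}=e^N$ with \eqref{eq:BC_lambdaOn1}, \eqref{eq:boundOn2Off}. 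I would therefore verify these two switching blocks explicitly, note that every remaining entry of $P_k$ sits in a same-mode block already matched, and conclude the blockwise equality $P_k=\Phi G_k$ in the form \eqref{eq:condIndFactMat}.
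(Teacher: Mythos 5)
Your proposal is correct and follows essentially the same route as the paper's own proof: carry out the block multiplication $\Phi G_k$, read $\Phi_{\textOff},\ I-\Phi_{\textOff}$ (and their on-mode counterparts) as complementary row selectors, and use $\alpha=(\Delta t)^{-1}$ to turn the switching CV's diagonal entry of $I+\Delta t A_k$ into $0$ and the cross-mode source entry into $1$, i.e.\ a sure mode flip. Your write-up is in fact more explicit than the paper's (which states the projector/row-zeroing observation and the $\alpha$ cancellation only in words, with some index typos in the displayed product), but the underlying argument is identical.
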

\ifshowArxivAlt
\begin{proof}
	See extended arxiv version~\cite{CoffmanControlArxiv:2020}.
\end{proof}	
\fi
	\ifshowArxiv
	\begin{proof}
	If $\alpha = (\Delta t)^{-1}$, the diagonal elements of $A_k$ with $\alpha$ in them will go to zero and the non diagonal elements will go to 1. These non-diagonal elements with value $1$ are the red dots in Figure~\ref{fig:sparPattern} and encapsulate the thermostat control law. Thus the construction of $\Phi$ with the canonical basis vectors. Now, multiplying out the matrix we have,
	\begin{align}
		\Phi G_k = \begin{bmatrix}
		\big(I-\Phi_{\textOff}\big)P_k^{\textOn} & \Phi_{\textOff}S_k^{\textOn} \\
		\Phi_{\textOn}S_k^{\textOn} & \big(I-\Phi_{\textOn}\big)P_k^{\textOn}
		\end{bmatrix}
	\end{align}
	where $\big(I-\Phi_{\textOff}\big)P_k^{\textOn}$ (respectively, $\big(I-\Phi_{\textOn}\big)P_k^{\textOff}$) is the matrix $P_k^{\textOn}$ (respectively, $P_k^{\textOff}$) but with the last (respectively, first) row zeroed out. The exact opposite statement is true for $\Phi_{\textOff}P_k^{\textOn}$ and $\Phi_{\textOn}P_k^{\textOff}$. Hence, by definition of the matrices in $G_k$ we have $P_k = \Phi G_k$ where each non-zero element holds the interpretation~\eqref{eq:condIndFact}.
\end{proof}
	\fi

The conditional independence factorization has been a useful assumption in the design of algorithms in~\cite{busmey:CDC:2016}. In the present it is a byproduct of our spatial and temporal discretization of the PDE's~\eqref{eq:pdeOnMode}-\eqref{eq:pdeOffMode}. There are at least two important consequences of the factorization result from Lemma~\ref{lem:condFact}. The first one is described in the following corollary.
\begin{cor} \label{cor:detPol}
	For the nominal thermostat policy~\eqref{eq:thermoContLaw}, our spatial and temporal discretization scheme induces the degenerate (deterministic) stationary policy:
	\begin{align} \nonumber
		&\Prob(m_k = \text{on} \ \vert \ I_k = N, \ m_{k-1} = \text{off}) = 1, \\ \nonumber
		&\Prob(m_k = \text{off} \ \vert \ I_k = 1, \ m_{k-1} = \text{on}) \ = 1,
	\end{align}
	and zero otherwise.
\end{cor}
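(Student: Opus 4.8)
The plan is to obtain the corollary by simply reading the factorization of Lemma~\ref{lem:condFact} back into probabilistic language: that lemma already fixes the policy factor $\Phi$ explicitly, so the statement here is just an interpretation of its nonzero entries as the conditional probabilities $\phi_u(v\mid j)$.

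First I would reconcile the time indices. By definition $\phi_u(v\mid j)=\Prob(m_{k+1}=v\mid I_{k+1}=j,\ m_k=u)$, and shifting $k\mapsto k-1$ turns this into exactly $\Prob(m_k=v\mid I_k=j,\ m_{k-1}=u)$, the quantity appearing in the corollary. Because $\Phi$ (and hence $\phi_{\textOff},\phi_{\textOn}$) carries no dependence on $k$ or on the weather $\theta^a_k$ — all of which is confined to $G_k$ — this shift is harmless and, moreover, establishes that the induced policy is stationary. Substituting the values from Lemma~\ref{lem:condFact}, $\phi_{\textOff}$ and $\phi_{\textOn}$ are the canonical basis vectors singling out the two switching control volumes at the edges of the deadband, namely the high-temperature off-CV ($j=N$) and the low-temperature on-CV ($j=1$). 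Reading off the corresponding entries gives $\phi_{\textOff}(\text{on}\mid N)=1$ and $\phi_{\textOn}(\text{off}\mid 1)=1$, which are the two displayed equalities; every other entry of a canonical basis vector vanishes, so $\phi_{\textOff}(\text{on}\mid j)=0$ for $j\neq N$ and $\phi_{\textOn}(\text{off}\mid j)=0$ for $j\neq 1$, giving the ``zero otherwise'' clause (the complementary no-switch probabilities follow from $\sum_v\phi_u(v\mid j)=1$). Since all entries are $0/1$-valued, the policy is degenerate/deterministic.

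The single point deserving care — and the one I would treat as the main obstacle — is ensuring the equalities read ``$=1$'' rather than a fractional value. This rests entirely on the hypothesis $\alpha=(\Delta t)^{-1}$ invoked in Lemma~\ref{lem:condFact}: that choice collapses the off$\to$on and on$\to$off transfers modeled by the sink/source terms~\eqref{eq:boundOff2On} and~\eqref{eq:boundOn2Off} into a single time step, converting a boundary rate into a transition probability of exactly one; without it the edge entries would be sub-unit and the policy would fail to be degenerate. I would also verify the orientation of the control-volume indexing against Figure~\ref{fig:cvLayout}, so that the basis vectors indeed land on CV $N$ in the off mode and CV $1$ in the on mode — an index swap there would flip which deadband edge triggers each switch.
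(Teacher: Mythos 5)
Your proposal is correct and takes essentially the same route as the paper, whose entire proof is to identify the nonzero entries of $\phi_{\text{off}}$ and $\phi_{\text{on}}$ from Lemma~\ref{lem:condFact} with the corresponding state values; your extra remarks on the time shift $k\mapsto k-1$, stationarity of $\Phi$, and the role of $\alpha=(\Delta t)^{-1}$ merely make explicit what the paper leaves implicit. Your caution about index orientation is also well placed: as literally written, Lemma~\ref{lem:condFact} sets $\phi_{\text{off}}=e^1$ and $\phi_{\text{on}}=e^N$, which is opposite to what the corollary and the physics of Figure~\ref{fig:cvLayout} require, so the reading you adopt, $\phi_{\text{off}}(\text{on}\mid N)=1$ and $\phi_{\text{on}}(\text{off}\mid 1)=1$, is the intended one.
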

\begin{proof}
	Identifying the non zero elements of the policies $\phi_{\textOn}$ and $\phi_{\textOff}$ in Lemma~\ref{lem:condFact} with the respective state values gives the desired result.
\end{proof}
Hence, the policy induced by the nominal thermostat policy~\eqref{eq:thermoContLaw} and described in Corollary~\ref{cor:detPol} is exactly the  nominal thermostat policy. This recovery of the original control law gives confidence in the underlying spatial and temporal discretization schemes. The second important consequence of Lemma~\ref{lem:condFact} is that it informs us how to define the dynamics of the marginals~\eqref{eq:margNu} under a different policy than the nominal thermostat policy. 

\subsection{Introducing control + aggregate model} \label{sec:contAggMod}
In light of Lemma~\ref{lem:condFact}, we can now introduce an arbitrary randomized policy in place of the degenerate nominal thermostat policies described in Corollary~\ref{cor:detPol}. From the viewpoint of the BA this  randomized policy \emph{is} the control input. To distinguish from $\phi_{\textOff}$ and $\phi_{\textOn}$ in the prior section we denote the newly introduced policies with the superscript `BA' and describe them as `BA control policies.' For example, electing policies $\phi^{\textBA}_{\textOn}$ and $\phi^{\textBA}_{\textOff}$ as,
\begin{align} \label{eq:randPolOff2On}
&\phi^{\textBA}_{\textOff}(\text{on} \ \vert \ j) = \begin{cases}
	\kappa^{\textOn}_j, & (m+1) \leq j \leq (N-1). \\
	1, & j = N. \\
	0, & \text{o.w.}
	\end{cases} \\ \label{eq:randPolOn2Off}
	&\phi^{\textBA}_{\textOn}(\text{off} \ \vert \ j) = \begin{cases}
	\kappa^{\textOff}_j, & 2 \leq j \leq (q-1). \\
	1, & j = 1. \\
	0, & \text{o.w.}
	\end{cases}
\end{align}
with $\phi^{\textBA}_{\textOff}(\text{off} \ \vert \ \cdot) = 1 - 	\phi^{\textBA}_{\textOff}(\text{on} \ \vert \ \cdot)$ and $\phi^{\textBA}_{\textOn}(\text{on} \ \vert \ \cdot) = 1 - 	\phi^{\textBA}_{\textOn}(\text{off} \ \vert \ \cdot)$ and $\kappa^{\textOn}_j,\kappa^{\textOff}_j \in [0,1]$ for all $j$ will preserve the factorization interpretation found in Lemma~\ref{lem:condFact}. The policies could also be time varying, for example: $\kappa^{\textOff}_j[k]$ and $\kappa^{\textOn}_j[k]$. The dependence of the policies on time is denoted as $\phi^{\textBA}_{\textOff}[k]$ and $\phi^{\textBA}_{\textOn}[k]$. 

We have required $\phi^{\textBA}_{\textOff}(\text{on} \ \vert \ j) = 0$ for $1\leq j \leq m$ since the temperatures corresponding to these indices are below the permitted deadband temperature, $\lambda^{\text{min}}$. Hence, turning on at these temperature does not make physical sense. The arguments for the zero elements in $\phi^{\textBA}_{\textOn}$ are symmetric.

\begin{remark} \label{rem:impRandPol}
From the individual TCLs perspective, implementation of BA control policies of the form~\eqref{eq:randPolOff2On}-\eqref{eq:randPolOn2Off} is straightforward: (i) the TCL measures its current state, (ii) the TCL ``bins'' this state value according to~\eqref{eq:binnedState} and (iii) the TCL flips a coin to decide its next on/off state according to the probabilities given in~\eqref{eq:randPolOff2On}-\eqref{eq:randPolOn2Off}. Note that the randomized policies are wrapped inside of the nominal thermostat policy~\eqref{eq:thermoContLaw}, so that both the BA control policy and nominal thermostat policies are equivalent in enforcing the temperature constraint.
\end{remark}

In the following, we denote $\Phi^{\textBA}_k$ as the matrix with structure~\eqref{eq:polMatInLem} but containing any time varying BA control policies $\phi^{\textBA}_{\textOff}[k]$ and $\phi^{\textBA}_{\textOn}[k]$ that satisfy the requirements specified in~\eqref{eq:randPolOff2On} and~\eqref{eq:randPolOn2Off}, respectively. With this, the \emph{control oriented aggregate model} is the following discrete time system
\begin{align} \label{eq:aggModel}
\nu_{k+1} = \nu_k\Phi^{\textBA}_k G_k, \quad \gamma_k = \nu_kC_{\textOn},
\end{align}
where $C_{\textOn} = [\mathbf{0}^T, P_{\textAgg}\mathbb{1}^T]^T$ with $P_{\textAgg} \triangleq P\numTCLs$. The control input for this model is the policy $\Phi^{\textBA}_k$, which can be implemented as a control input at each TCL (see previously Remark~\ref{rem:impRandPol}). The nominal consumption for the ensemble expressed in terms of the nominal consumption of the individual TCL~\eqref{eq:invTCLbase} is, 
\begin{align}
\bar{P}_k \triangleq \numTCLs\bar{P}^{\text{ind}}_k.
\end{align}
The nominal consumption is time varying due to its dependence on the time varying ambient temperature. This quantity, modulo a constant, represents the fraction of TCLs that are on in nominal operation.
\section{Numerical Examples} \label{sec:aggModel}
We now conduct numerical experiments to show: (i) how the PDE's~\eqref{eq:pdeOnMode}-\eqref{eq:pdeOffMode} can be used to model an ensemble of TCLs and (ii) how the framework can be used to design BA control policies so that the ensemble of TCLs track a power reference signal. Each TCL is indexed by $\ell$ and the total number of TCLs is denoted \numTCLs. For example, $m_k^\ell$ and $I_k^\ell$ are the mode and binned temperature of the $\ell^{th}$ TCL at time $k$.
\subsection{Evaluating the aggregate model}
Two empirical ensemble quantities of interest are:
\begin{align} \nonumber
	Y_k \triangleq P\sum_{\ell=1}^{\numTCLs}m_k^\ell, \ \text{and} \ H_k[i,u] \triangleq \sum_{\ell=1}^{\numTCLs} \mathbf{I}\big(I^{\ell}_k = i, m_k^\ell = u\big), 
\end{align}
which are the total power consumption and histogram of the ensemble, respectively. They are empirical counterparts to the analytical quantities described through the aggregate model~\eqref{eq:aggModel}.

We now compare the empirical and analytical aggregate quantities in simulation. The results are shown in Figure~\ref{fig:histSimCompare} and~\ref{fig:powerSimCompare} for $\numTCLs = 50000$. The mode state of each TCL evolves according to a BA control policy that satisfies the structural requirements in~\eqref{eq:randPolOff2On} and~\eqref{eq:randPolOn2Off} and is relatively similar to the nominal thermostat policy (Corollary~\ref{cor:detPol}). The temperature evolution evolves according to a simulated version of~\eqref{eq:stoModelTCL}. We see the state $\nu_k$ matches well the histogram $H_k$ of the ensemble (Figure~\ref{fig:histSimCompare}) and the output $\gamma_k$ matches well the ensembles power consumption $Y_k$ (Figure~\ref{fig:powerSimCompare}).   
\subsection{Controlling the Ensemble}  
Due to space limitations, a full description of the control algorithm is not possible. However, as a preview we present simulation results from the algorithm in Figure~\ref{fig:powerTrackCent}. The reference signal $r_k$ shown in Figure~\ref{fig:powerTrackCent} is an arbitrarily generated sum of sinusoids added to the nominal power, $\bar{P}_k$. The ambient air temperature is time varying and is obtained from  weatherunderground.com for a typical summer day in Gainesville, Fl.

The control algorithm amounts to solving a convex optimization problem, and its facilitation is in large part due to the identified structure. Essentially, the optimization problem utilizes the model~\eqref{eq:aggModel} to obtain a string of optimal randomized BA control policies $\Phi^{\textBA,*}_k$. The BA can then send these policies to each TCL, where implementation is as described in~\ref{rem:impRandPol}. Each TCL using the designed BA control policies has the effect of the ensemble tracking $r_k$, as shown in Figure~\ref{fig:powerTrackCent}.

\begin{figure}
	\centering
	\includegraphics[width = 1\columnwidth]{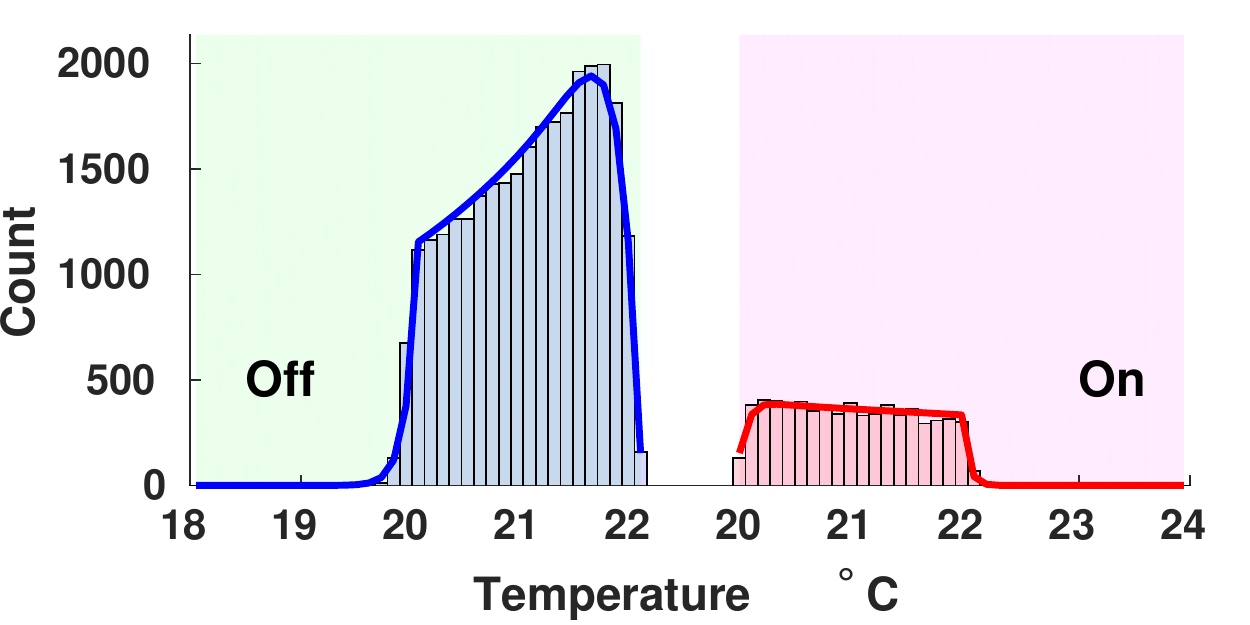}
	\caption{Histogram of the ensemble $H_k$ compared with the marginals $\nu_{\textOff}[k]$ and $\nu_{\textOn}[k]$ obtained from the aggregate model.}
	\label{fig:histSimCompare}
\end{figure}

\begin{figure}
	\centering
	\includegraphics[width = 1\columnwidth]{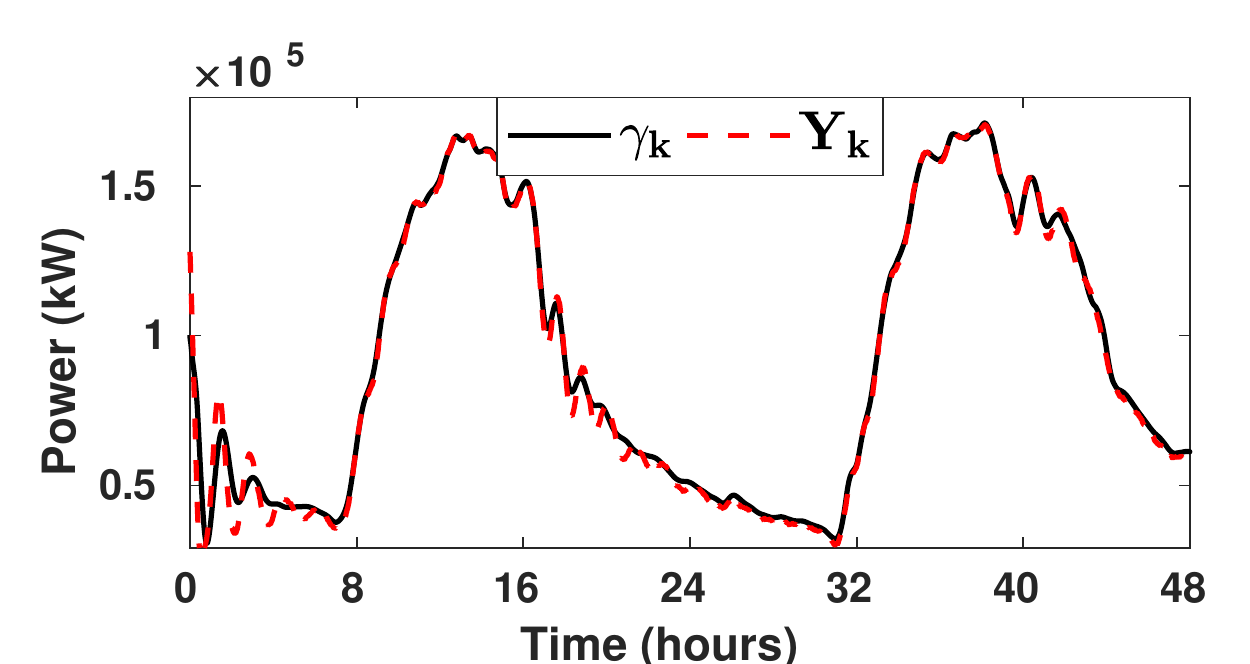}
	\caption{The total power consumption $Y_k$ compared with the output of the aggregate model, $\gamma_k$.}
	\label{fig:powerSimCompare}
\end{figure}

\begin{figure}
	\centering
	\includegraphics[width=1\columnwidth]{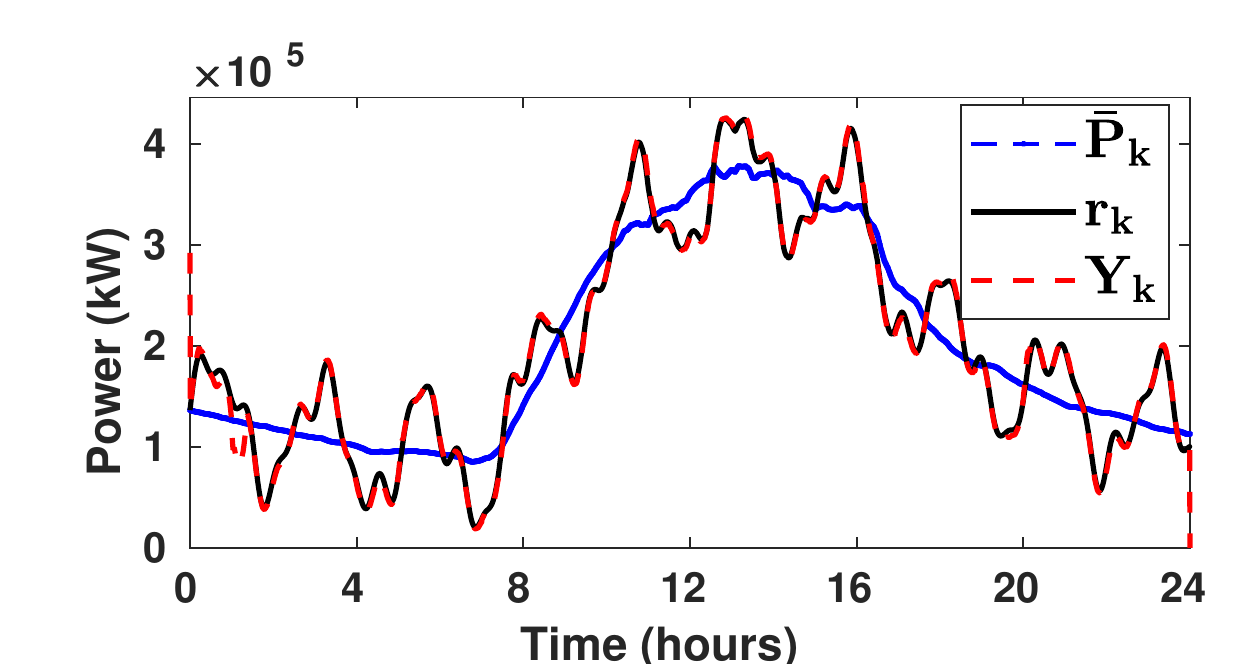}
	\caption{The power consumption of the ensemble $Y_k$ compared with the desired reference $r_k$ and the baseline power $\bar{P}_k$.}
	\label{fig:powerTrackCent}
\end{figure}

\ifx 0
\begin{figure}
	\centering
	\includegraphics[width = 1\columnwidth]{bassPowerCompare.pdf}
	\caption{AA}
	\label{fig:bassPowerSimCompare}
\end{figure}

\begin{figure}
	\centering
	\includegraphics[width = 1\columnwidth]{natOnSwitchCompare.pdf}
	\caption{AA}
	\label{fig:natOnSwitchCompare}
\end{figure}

\begin{figure}
	\centering
	\includegraphics[width = 1\columnwidth]{natOffSwitchCompare.pdf}
	\caption{AA}
	\label{fig:natOffSwitchCompare}
\end{figure}
\fi

\section{Conclusion} \label{sec:conc}
We discretize the Fokker-Planck equations, derived in the past literature~\cite{MalhameElectricTAC:1985}, for a population of TCLs. The discretized equations are then shown to satisfy a certain factorization: the effects of weather and control factor out. The discretized model is verified in simulation, and preliminary results of using the model with its identified factorization for control are shown. Future work entails incorporating the cycling state into the obtained model.

\ifx 0
\section{Resource Allocation}
Resource allocation refers to the allocation of a resource to provide a service. In this context the resource is an ensemble of TCLs, and the service is for the ensembles power consumption to track a desired regulation signal. In this sense, resource allocation is modeled through the following optimization problem,
\begin{align} \label{eq:nonConvOpt}
	\eta^* = \min_{\nu_k,\Phi_k} \ &\eta(\hat{\nu}) = \sum_{k=j}^{j+T}(r_k - \gamma_k)^2  \\
	\text{s.t.} \ &\nu_{k+1} = \nu_k\Phi_kG_k, \ \nu_j = \hat{\nu} \\
				&\nu_k \in [0,1], \ \Phi_k \in \varPhi,
\end{align}
where $r_k$ is the regulation signal, i.e., the desired power consumption at time $k$ and $\varPhi$ is the set defined as
\begin{align} \nonumber
	\varPhi \triangleq \Big\{\Phi \in [0,1]\ \big\vert \ &\phi_{\textOff} \ \text{satisfies}~\eqref{eq:randPolOff2On}, \ \phi_{\textOn} \ \text{satisfies}~\eqref{eq:randPolOn2Off} \Big\}.
\end{align}
Essentially, the set $\varPhi$ ensures that the policies satisfy the pre-specified structure set in~\eqref{eq:randPolOff2On} and~\eqref{eq:randPolOn2Off}.
The problem~\eqref{eq:nonConvOpt} is a non-convex optimization problem, and a well known convexification remedy~\cite{ManneLinearINFORMS:1960} is to consider the following change of variables,
\begin{align}
	J_k = \text{diag}(\nu_k)\Phi_k,
\end{align}
where each non-zero element of $J_k$ can be thought of as the joint distribution $\Prob\left(m_{k+1} = v,\ I_{k+1} = i, \ m_k = u\right)$.
By construction, we have that
\begin{align}
	\nu_{k+1} = \mathbb{1}^TJ_kG_k, \quad \text{and} \quad \nu^T_k = J_k\mathbb{1},
\end{align}
since $\mathbb{1}^T\text{diag}(\nu_k) = \nu_k$  and $\mathbb{1} = \Phi_k\mathbb{1}$. However, due to the structure of our policy, and consequently $\Phi_k$, we do not need to declare the entirety of $J_k$ as a decision variable. For instance, we see that $\text{diag}(\nu_k)\Phi_k$ is a block matrix, where each matrix block is a diagonal matrix. We express this as: $\text{diag}(\nu_k)\Phi_k =$
\begin{align} \label{eq:blockPolMat}
&\begin{bmatrix}
B_{\textOff,\textOff}[k] & B_{\textOff,\textOn}[k] & \mathbf{0} & \mathbf{0} \\
\mathbf{0} & \mathbf{0} & B_{\textOn,\textOff}[k] & B_{\textOn,\textOn}[k]
\end{bmatrix}	\\ \nonumber &\delequalRHS\text{sparse}(J_k),
\end{align}
where, e.g., $B_{\textOff,\textOff}[k] = \text{diag}(\nu_{\textOff}[k])(I - \Phi_{\textOff}[k])$. The other diagonal matrices appearing in~\eqref{eq:blockPolMat} can be inferred by carrying out the matrix multiplication. Additionally, the equality constraints in $\varPhi$ are all of the form $\phi_{\textOff}(\textOn \ \vert \ j) = \kappa$ and can be enforced in our new decision variables through the definition of the joint distribution as 
\begin{align}
\Prob\left(m_{k} = \text{on},\ I_{k} = j, \ m_k = \text{off}\right) = \kappa\nu_\textOff[\lambda^j,k].
\end{align}
Requiring both the joint distribution and marginal distribution to be within $[0,1]$ with the constraint $\nu^T_k = J_k\mathbb{1}$ enforces the inequality constraints in $\varPhi$.
We also have found it necessary to include constraints of the form,
\begin{align} \label{eq:constAddOne}
	&\phi_{\textOff}(\text{on} \ \vert \ j-1)\nu_\textOff[\lambda^{j-1},k] \leq \phi_{\textOff}(\text{on} \ \vert \ j)\nu_\textOff[\lambda^{j},k] \\ \label{eq:constAddTwo}
	&\phi_{\textOn}(\text{off} \ \vert \ j+1)\nu_\textOn[\lambda^{j+1},k] \leq \phi_{\textOn}(\text{off} \ \vert \ j)\nu_\textOn[\lambda^j,k]
\end{align}
so to suggest that the switching on (resp., switching off) probability increases as temperature increases (resp., decreases).
We denote the transcription of the constraints in $\varPhi$ and the additional constraints~\eqref{eq:constAddOne} and~\eqref{eq:constAddTwo} in the new decision variables as $\text{sparse}(J_k) \in \text{sparse}(\varPhi)$. Now, defining the vector: $z_k \triangleq$
\begin{align}
	[\nu_{k+1}, B_{\textOn,\textOff}[k], B_{\textOff,\textOn}[k],B_{\textOn,\textOn}[k],B_{\textOff,\textOff}[k]],
\end{align}
we can formulate an optimization problem in terms of $z_k$ that is equivalent to~\eqref{eq:nonConvOpt} but is now convex, 
\begin{align} \label{prob:convOpt}
	\eta^* = \min_{\{z_k\}_{k=j}^{j+T}} \ &\eta(\hat{\nu}) = \sum_{k=j}^{j+T}(r_k - \gamma_k)^2 \\ \nonumber
\text{s.t.} \ &\forall \ k \in \{j,\dots, j+T\} \\ &\nu_{k+1} = \mathbb{1}^T \text{sparse}(J_k)G_k, \\
		&\nu^T_k = \text{sparse}(J_k)\mathbb{1}\\
		& z_k \in [0,1], \ \nu_j = \hat{\nu}, \\
		&\text{sparse}(J_k) \in \text{sparse}(\varPhi),
\end{align}
where $T>0$ is the time horizon. If $J_k$ was declared directly as a decision variable the problem~\eqref{prob:convOpt} would have $(8N^2+2N)T$ primal variables, whereas the problem with $z_k$ as a decision variable only has $6NT$ primal variables. As an example, suppose that $N=30$ and $T = 280$ the problem~\eqref{prob:convOpt} without the structure exploited has $\approx 2$ million decision variables, utilizing the structure reduces the dimensionality to $\approx 50000$; \emph{two} orders of magnitude reduction. 

We denote the optimal values to~\eqref{prob:convOpt} at index $k$ as $z^*_k = [\nu^*_{k+1}, B^*_{\text{on},\text{off}}[k], B^*_{\text{off},\text{on}}[k],B^*_{\text{on},\text{on}}[k],B^*_{\text{off},\text{off}}[k]]$. Based on the construction of sparse($J_k$), the optimal randomized policies at index $k$ are obtained from the elements of $z^*_k$ (and $z^*_{k-1}$) as,
\begin{align} \label{eq:optPolOff2On}
\phi^*_{\textOff}[k]&= \Big(\text{diag}(\nu^*_{\textOff}[k])^{\dagger}B_{\textOff,\textOn}^{*}[k]\Big)\mathbb{1}, \\ \label{eq:optPolOn2Off}
\phi^*_{\textOn}[k]&= \Big(\text{diag}(\nu^*_{\textOn}[k])^{\dagger}B_{\textOn,\textOff}^{*}[k]\Big)\mathbb{1},
\end{align}
where for a diagonal matrix $A$ the $i^{th}$ diagonal element of $A^{\dagger}$ is
\begin{align}
[A^{\dagger}]_{i,i} = \begin{cases}
1/[A]_{i,i}, & [A]_{i,i} \neq 0. \\
0, & [A]_{i,i} = 0.
\end{cases}
\end{align}
\subsection{Hierarchical solution} \label{sec:hierSolnNoCyc}
We solve the problem~\eqref{prob:convOpt} on one centralized computer, then broadcast the string of policies defined by~\eqref{eq:optPolOff2On}-\eqref{eq:optPolOn2Off} to each TCL. In this example, this is done once and the ensemble of TCLs operate in open loop for $T=24$ hours. Once each TCL receives the string of policies, mode state decisions are made each sampling time with the policies (recall, a description for how exactly this is done in Remark~\ref{rem:impRandPol}). 

The reference signal $r_k$ in this experiment is an arbitrarily generated sum of sinusoids signal added to the baseline power, $\bar{P}^{\textAgg}_k$. The ambient air temperature is time varying and is obtained from  weatherunderground.com for a typical summer day in Gainesville, Fl.

The power consumption of an ensemble of TCLs all making mode state decisions according to the broadcasted policies is shown in Figure~\ref{fig:powerTrackCent}. In Figure~\ref{fig:histTrackCent}, we plot the histogram of the ensemble at a single time instance with the marginals $\nu_{\textOff}[k]$ and $\nu_{\textOn}[k]$ obtained from the aggregate model. The main take away is that the aggregate model: (i) accurately captures the distribution of the population in open loop and (ii) allows for the design of policies for TCLs to track regulation signals in open loop. In practice, the loop can be closed and the problem~\eqref{prob:convOpt} can be solved in Model Predictive Control (MPC) fashion for improved robustness to uncertainty.
\begin{figure}
	\centering
	\includegraphics[width=1\columnwidth]{histTrackingCentralized.pdf}
	\caption{Comparison of the marginals $\nu_{\textOff}[k]$ and $\nu_{\textOn}[k]$ with the histogram of the population $H_k$.}
	\label{fig:histTrackCent}
\end{figure}
\begin{figure}
	\centering
	\includegraphics[width=1\columnwidth]{powerTrackingCentralized.pdf}
	\caption{The power consumption of the ensemble $Y_k$ compared with the desired reference $r_k$ and the baseline power $\bar{P}^{\textAgg}_k$.}
	\label{fig:powerTrackCent}
\end{figure}

\section{Incorporating cycling constraints}
In the preceding section it was shown that the problem~\eqref{prob:convOpt} can be efficiently solved to yield a policy that an actual TCL can implement to track the desired regulation signal. What was not shown was the potential capability of extreme short cycling of the TCLs when they follow the said policy. Fortunately, through state augmentation it is possible to incorporate the lockout effect~\cite{LiuDistributedTIE:2016}. We describe how to include the cycling constraint as an additional constraint, and provide an analytical formula for the transition matrix with this in mind. Once this transition matrix is determined it can be used in the optimization problem~\eqref{prob:convOpt} to solve for policies that consider the lockout effect.
\subsection{Individual Cycling QoS Model}
First, we denote an on/off mode switch as,
\begin{align}
S^{\textOn,\ell}_{k-1} \triangleq \begin{cases}
1, & \text{if} \ (m^{\ell}_{k}-m^{\ell}_{k-1}) = 1.\\
0, & \text{otherwise.}
\end{cases} \\	
S^{\textOff,\ell}_{k-1} \triangleq \begin{cases}
1, & \text{if} \ (m^{\ell}_{k-1}-m^{\ell}_{k}) = 1.\\
0, & \text{otherwise.}
\end{cases}
\end{align}
Since an on or off switch are mutually exclusive with each other, a switch occurring is the sum: $S^{\ell}_k=S^{\textOn,\ell}_{k} + S^{\textOff,\ell}_{k}$. We then represent the TCLs locked on status as $\LockedEll_k$, which is expressed as:
\begin{align}
	\LockedEll_k \triangleq \begin{cases}
	1, & \sum_{j=0}^{\tau - 1}S^{\ell}_{k-j} = 1.\\
	0, & \text{otherwise}.
	\end{cases}.
\end{align}
This express that a TCL can switch once in $\tau$ time units. We now define the following,
\begin{align} \label{eq:timeSpentLocked}
	T_k \triangleq \text{Time spent locked since previously not locked.}
\end{align}
By definition~\eqref{eq:timeSpentLocked}, $T_k \in \{0,\dots, \tau\}$ with $T_k = 0$ implying that the TCL is not locked. We can express $T_k$ as an exogenously perturbed inventory system in state feedback,
\begin{align} \label{eq:dynTimeLockedOn}
	T_k = T_{k-1} + \LockedEll_{k-1} - \phi_{d}(T_{k-1}),
\end{align}
where the policy is deterministic and given by
\begin{align}
	\phi_d(T) = \begin{cases}
	\tau, & T = \tau. \\
	0, & \text{otherwise}.
	\end{cases}
\end{align}
The inventory system~\eqref{eq:dynTimeLockedOn} can also be viewed as a controlled Markov chain evolving on the finite state space $\{0,\dots, \tau\}$ with $d_{k-1} = \phi_{d}(T_{k-1})$ the control input. The controlled transition probabilities for this chain are of the form,
\begin{align}
	\Prob(T_{k+1} = \tau_1 \ \vert \ T_k = \tau_2, \ d_k = d, \ \LockedEll_k = l).
\end{align}

We now need to augment the joint input state values $(T_k,d_k)$ to the previous joint input state values $(I_k,m_k)$. While it is possible to take the model~\eqref{eq:dynTimeLockedOn}, form a transition matrix, and then rigorously connect the two state pairs, this construction will most likely offer little insight to the reader. 

\subsection{Combined Transition Matrix}

Rather, we present an intuitive construction based on the following: If a TCL switches from on to off (or off to on), it immediately becomes locked and must remain locked for $\tau$ time instances to satisfy its cycling QoS. With this, we now describe how $T_k$ will evolve under two scenarios: (i) $I_k < N$ and $m_k =$ off or $I_k > 1$ and $m_k =$ on and (ii) $I_k = N$ and $m_k = $off or $I_k = 1$ and $m_k=$ on. 

\subsubsection{Scenario 1} For example, if $m_k = \text{on}$ and $m_{k+1} = \text{off}$ then necessarily $T_k = 0$ and $T_{k+1} = 1$. Further, if for either $m_k =$ on or $m_k =$ off if we have that $T_k = \tau_1$ with $1\leq \tau_1< \tau$ then necessarily $T_{k+1} = \tau_1 + 1$. Lastly, if $T_k = \tau$ then necessarily $T_{k+1} = 0$. Hence, the formation of the transition matrix with state space $(I_k,m_k,T_k)$ can be done by enforcing state transitions to respect the evolution for $T_k$ just described.

\subsubsection{Scenario 2} In this scenario, we must specify which constraint has preference, the temperature constraint or the cycling constraint. That is, at a time $k$ it could be the case that $T_k = \tau_1$ with $\tau_1 < \tau$, $I_k = N$, and $m_k =$ off. In this case, $m_{k+1} =$ on is required to ensure the temperature constraint is satisfied but doing so will violate the cycling QoS. We elect the temperature constraint to have higher preference in this scenario, and reset the counter to $T_{k+1} = 1$. Hence, the device becomes locked in the on state at time $k+1$. The same argument is identical for when $I_k = 1$ and $m_k = $ on.

The sparsity pattern for the transition matrix with state $(I_k,m_k,T_k)$ is shown in Figure~\ref{fig:sparPatternCycling}. An expression for this matrix is, 
	\begin{align}
		\begin{bmatrix}
		e^1\otimes (I-\Phi_{\textOff})P^{\textOn} &\vline&  e^2\otimes \Phi_{\textOff}S^{\textOn} \\
		C_\tau \otimes (I - \Phi^{\textTS}_{\textOff})P^{\textOn} &\vline& (\mathbf{1}\otimes e^2)\otimes \Phi^{\textTS}_{\textOff}S^{\textOn} \\
		\hline 
		e^1\otimes (I-\Phi_{\textOn})P^{\textOff} & \vline & e^2\otimes \Phi_{\textOn}S^{\textOff} \\
		C_\tau \otimes P^{\textOff} &\vline& (\mathbf{1}\otimes e^2)\otimes \Phi^{\textTS}_{\textOn}S^{\textOff}
		\end{bmatrix},
	\end{align}
where $e^\ell$ be the $\ell^{th}$ canonical basis vector in $\mathbb{R}^{\tau+1}$
	
	\subsection{Hierarchical Solution with cycling state}
	Similar to section~\ref{sec:hierSolnNoCyc} we solve the optimization problem~\eqref{prob:convOpt}, but now with including the additional state $T_k$. The reference tracking results are shown in Figure~\ref{fig:refTrackPlot_withCycState} and the cycling QoS is shown in Figure~\ref{fig:cycQos_withCycState}.
	\begin{figure}
		\centering
		\includegraphics[width=1\columnwidth]{histCyclingQoS_withCycleState.pdf}
		\caption{Histogram of all the times between successive mode state switches for the population of TCLs. The vertical line is centered at the value $\tau = 10$ on the x-axis.}
		\label{fig:cycQos_withCycState}
	\end{figure}

	\begin{figure}
		\centering
		\includegraphics[width=1\columnwidth]{powerTrackingCentralized_withCycleState.pdf}
		\caption{The power consumption of the ensemble $Y_k$ compared with the desired reference $r_k$ and the baseline power $\bar{P}^{\textAgg}_k$ for the system with the cycling state including.}
		\label{fig:refTrackPlot_withCycState}
	\end{figure}	

\section{Conclusion}
We design a convex optimization problem for the control of TCLs.
\fi
\ifshowArxivAlt
\bibliographystyle{IEEEtran}
\bibliography{/home/austin/allBibFiles/Barooah,/home/austin/allBibFiles/optimization,/home/austin/allBibFiles/ControlTheory,/home/austin/allBibFiles/basics,/home/austin/allBibFiles/distributed_control,/home/austin/allBibFiles/grid}	
\fi

\ifshowArxiv

\fi

\end{document}